\newtheorem{Definition}{Definition}
\newtheorem{theorem}{Theorem}
\newtheorem{Corollary}{Corollary}
\newtheorem{Proposition}{Proposition}
\newtheorem{lemma}{Lemma}
\renewcommand{\emph}[1]{\textsl{#1}}
\newcommand{\SUD}{$U(0,1)$\xspace}
\newcommand{\var}{\mathrm{var}}
\newcommand{\FWER}{\text{$\mathrm{FWER}$}}
\newcommand{\FWERcb}{\mathrm{FWER}_{\mathrm{CBP}}^{\lambda,\alpha}}
\newcommand{\FDR}{\mathrm{FDR}}
\renewcommand{\P}{\mathcal{P}}
\newcommand{\Plam}{\mathcal{P}^\lambda}
\newcommand{\plim}{\mathop{\mathrm{plim}}}
\newcommand{\cor}{\mathrm{cor}}
\newcommand{\E}{\mathrm{E}}
\newcommand{\R}{\mathbb{R}}                 
\newcommand{\diag}{\mathrm{diag}}
\def\mbf#1{\boldsymbol{#1}}                
\begin{document}
\begin{frontmatter}

\title{Gaining power in multiple testing of interval hypotheses via conditionalization}
\runtitle{Interval hypotheses}

\author{\fnms{Jules L.} \snm{Ellis$^1$} \corref{} } 
\and
\author{\fnms{Jakub} \snm{Pecanka$^2$} } 
\and
\author{\fnms{Jelle} \snm{Goeman$^2$} }

\affiliation{Radboud University Nijmegen \thanksmark{m1} and Leiden University Medical Center \thanksmark{m2}}


%

\runauthor{Ellis et al.}

\address{1. Radboud University Nijmegen, 2. Leiden University Medical Center}

\begin{abstract}
In this paper we introduce a novel procedure for improving multiple testing procedures (MTPs) under scenarios when the null hypothesis $p$-values tend to be stochastically larger than standard uniform (referred to as \emph{inflated}). An important class of problems for which this occurs are tests of interval hypotheses. The new procedure starts with a set of $p$-values and discards those with values above a certain pre-selected threshold while the rest are corrected (scaled-up) by the value of the threshold. Subsequently, a chosen family-wise error rate (FWER) or false discovery rate (FDR) MTP is applied to the set of corrected $p$-values only. We prove the general validity of this procedure under independence of $p$-values, and for the special case of the Bonferroni method we formulate several sufficient conditions for the control of the FWER. It is demonstrated that this 'filtering' of $p$-values can yield considerable gains of power under scenarios with inflated null hypotheses $p$-values.
\end{abstract}

\begin{keyword}[class=MSC]
\kwd[Primary ]{62J15}
\kwd[; secondary ]{62G30, 62P15, 62P10}
\end{keyword}

\begin{keyword}
\kwd{conditionalized test}
\kwd{false discovery rate}
\kwd{familywise error-rate}
\kwd{multiple testing}
\kwd{one-sided tests}
\kwd{uniform conditional stochastic order}
\end{keyword}

\end{frontmatter}

\section{Introduction}

Multiple testing procedures (MTPs) generally assume that $p$-values of true null hypotheses follow the standard uniform distribution or are stochastically larger. The latter situation may occur when interval (rather than point) null hypotheses are tested. Under such scenarios the $p$-values are standard uniform typically only in borderline cases such as when the true value of a parameter is on the edge of the null hypothesis interval. When the true value of the parameter is in the interior of the interval the $p$-values tend to be stochastically larger than uniform, sometimes dramatically so with many $p$-values having distribution concentrated near 1. We call such $p$-values \emph{inflated}.

There are many practical examples of multiple testing situations with interval null hypotheses with some or all of the true parameter values located (deep) in the interior of the null hypothesis interval. (1.) In test construction according to the nonparametric Item Response Theory (IRT) model of Mokken (1971), one can test whether all item-covariances are nonnegative \citep{Mokken1971, Rosenbaum1984, HR1986, JE1997}. Ordinarily, most item-covariances are substantially greater than zero, with only a few negative exceptions.
(2.) In large-scale survey evaluations of public organisations, such as schools or health care organizations, it can be interesting to test whether organizations score lower than a benchmark \citep{Normand2007, Ellis2013}. If many organisations score well above the benchmark, a large number of the $p$-values of true null hypotheses become inflated. (3.) When a treatment or a drug is known to have a substantial positive treatment effect in a given population it can be of interest to look for adverse treatment effects in subpopulations. The $p$-values of most null hypotheses again become inflated.

Intuitively, if the null $p$-values tend to be stochastically larger than uniform, true and false null hypotheses should be easier to distinguish, making the multiple testing problem easier. However, most MTPs focus the error control on the 'worst case' of standard uniformity and thus miss the opportunity to yield more power for inflated $p$-values. Consequently, in the presence of inflated $p$-values the actual error rate can be (much) smaller than the nominal level. Some MTPs actually lose power when null $p$-values become inflated, e.g.\ adaptive FDR methods \citep[e.g.][]{Storey2002} that incorporate an estimate of $\pi_0$, the proportion of null $p$-values \citep[note 9 on p.\ 258]{Fischer2012}.

In this paper we propose a procedure which improves existing MTPs in the presence of inflated $p$-values by adding a simple conditionalization step at the onset of the analysis. For an a priori selected threshold $\lambda\in(0,1]$ (e.g.\ $\lambda=0.5$) we remove (i.e. not reject) all hypotheses with $p$-value above $\lambda$. The remaining $p$-values are scaled up by the threshold $\lambda$: $p_i'=p_i/\lambda $. The selected MTP is subsequently performed on the rescaled $p$-values only. We refer to the altered procedure as the \emph{conditionalized} version of the MTP, leading to procedures such as the \emph{conditionalized Bonferroni procedure} (CBP).

In terms of power, there are both benefits and costs associated with conditionalization. The costs come from the scaling of the $p$-values by $1/\lambda$, thus effectively increasing their values. If a fixed significance threshold were used, the number of significant $p$-values would decrease. However, the conditionalization step also tends to increase the significance threshold for each $p$-value by reducing the multiple testing burden (i.e. the number of hypotheses corrected for). Crucially, in scenarios with a large portion of substantially inflated $p$-values the increased significance threshold means that the overall effect of conditionalization results in a more powerful procedure.


In the remainder of the paper we formally investigate the effects and benefits of conditionalization. We prove that for scenarios with inflated $p$-values conditionalized procedures retain type I error control whenever $p$-values are independent. Our result applies to both the family-wise error rate (FWER), the false discovery rate (FDR), and other error rates. We also show that if $p$-values are not independent, such control is not automatically guaranteed. We formulate conditions which are sufficient for the control of the FWER by the CBP. We conjecture that the CBP is generally valid for positively correlated $p$-values. Finally, the power of conditionalized procedures is investigated using simulations.

\section{Definition of conditionalized tests}

We define a multiple testing procedure (MTP) $\P$ as a mapping that transforms any finite vector of $p$-values into an equally long vector of binary decisions. If $\P({p_1},\ldots,{p_m})=({d_1},\ldots,{d_m})$, then ${d_i}$ indicates whether the null hypothesis corresponding to $p_i$ is rejected ($d_i=1$) or not ($d_i=0$). We define a decision rate as the expected value of a function of $\P({p_1},\ldots,{p_m})$. We denote the FWER and FDR of the procedure $\P$ as $\FWER_{\P}$ and $\FDR_{\P}$, respectively. 


For $\lambda\in(0,1]$ and an MTP $\P$ we define the corresponding conditionalized MTP $\Plam$ as the MTP that, on input of a vector of $p$-values $({p_1},\ldots,{p_m})$,  applies $\P$ to the sub-vector consisting of only the rescaled $p$-values $p_i/\lambda$ with $p_i\leq\lambda$, and that does not reject the null hypotheses of the $p$-values with $p_i > \lambda$. Throughout the paper we always assume that both the level of significance $\alpha$ and the conditionalization factor $\lambda$ are fixed (independently of the data) prior to the analysis.





In this paper we pay special attention to the conditionalized Bonferroni procedure (CBP) and its control of the FWER. For $\lambda\in(0,1]$ define $R_m(\lambda)=\sum_{i=1}^m\boldsymbol{1}\{p_i\le\lambda\}$. Let $\mathcal{T}\subseteq\{1,\ldots,m\}$ be the index set of true null hypotheses. The FWER of the CBP is defined as
\begin{align*}
\FWER_{\text{CB}}^{\lambda,\alpha}=P\Big(\bigcup\limits_{i\in\mathcal{T}}{\Big[\,p_i<\frac{\alpha \lambda }{R_m(\lambda) \vee 1}}\Big]\Big).
\end{align*}
If $\FWER_{\text{CB}}^{\lambda,\alpha}\le\alpha$ for given $\lambda$ and $\alpha$ we say that \emph{the CBP controls the FWER}
for those $\lambda$ and $\alpha$.

Note that for the sake of simplicity in the rest of the paper we sometimes suppress one or both arguments and simply use $R_m$, $R(\lambda)$, or even $R$ in the place of $R_m(\lambda)$. The proofs of all theorems and lemmas formulated below can be found in the Appendix.

\section{FWER and FDR of independent tests}

In this section we state our main result: a conditionalized procedure controls FWER (or FDR) if the non-conditionalized procedure controls FWER (or FDR) and if the test statistics are independent and the marginal distributions satisfy a condition that we call \emph{supra-uniformity}.

\begin{Definition}[supra-uniformity]
\label{def:supre-uniform}
The distribution of ${p_i}$ is supra-uniform if for all $\lambda ,\gamma  \in [0,1]$ with $ \gamma \leq\lambda$ it holds $P({p_i} < \gamma\,|\,{p_i}\leq\lambda )\leq\gamma /\lambda$. We say that ${p_i}$ is supra-uniform if its distribution is supra-uniform.
\end{Definition}

Supra-uniformity is also known as the uniform conditional stochastic order (UCSO) \citep[defined by][]{Whitt1980, Whitt1982, KS1982,Ruschendorf1991} relative to the standard uniform distribution \SUD. It is well-known that this condition is implied if ${p_i}$ dominates \SUD in likelihood ratio order \citep[e.g.,][]{Whitt1980, Denuit2005}. \citealt{Whitt1980} shows that when the sample space is a subset of the real line and the probability measures have densities, then UCSO is equivalent to the monotone likelihood ratio (MLR) property (i.e. for every $y>x$ it holds $f(y)/g(y)\geq{}f(x)/g(x)$). In the case of \SUD (i.e. when $g$ is a constant) it is immediately clear that MLR is equivalent to the $p$-values having densities that are increasing on $(0,1)$, which is further equivalent to having cumulative distribution functions that are convex on $(0,1)$.

\begin{theorem} \label{thm:independent}

Let $\P$ be an MTP and $D$ be a decision rate (e.g., FWER of FDR) such that $D_{\P}\leq\alpha$ for $\alpha\in(0,1)$ whenever the $p$-values of the true hypotheses are independent and supra-uniformly distributed. If the $p$-values of the true hypotheses are independent and supra-uniform, then, for the conditionalized MTP $\Plam$ it holds that $D_{\Plam}\le\alpha$.
\end{theorem}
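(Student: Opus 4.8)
I would argue by conditioning on which $p$-values fall at or below the threshold, reducing $\Plam$ on that event to $\P$ applied to a smaller problem whose true-null $p$-values remain independent and supra-uniform, so that the assumed control of $\P$ applies there directly. The fact that makes this work is that supra-uniformity is \emph{stable} under the conditionalization step: if $p_i$ is supra-uniform and $P(p_i\le\lambda)>0$, then for all $0\le\gamma'\le\lambda'\le1$,
\[
P\bigl(p_i/\lambda<\gamma' \mid p_i/\lambda\le\lambda'\bigr)
= P\bigl(p_i<\lambda\gamma' \mid p_i\le\lambda\lambda'\bigr)
\le \frac{\lambda\gamma'}{\lambda\lambda'}=\frac{\gamma'}{\lambda'},
\]
where the middle equality uses $\{p_i/\lambda\le\lambda'\}=\{p_i\le\lambda\lambda'\}\subseteq\{p_i\le\lambda\}$ and the inequality is Definition~\ref{def:supre-uniform} applied to $p_i$ with the admissible pair $(\lambda\gamma',\lambda\lambda')$ (admissible since $\lambda\gamma'\le\lambda\lambda'\le1$). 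Thus the conditional law of $p_i/\lambda$ given $p_i\le\lambda$ is itself supra-uniform.

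Next I would condition. Put $B_i=\boldsymbol 1\{p_i\le\lambda\}$ and $S=\{i:B_i=1\}$, so that $|S|=R_m(\lambda)$, and condition on the survival pattern $B=(B_1,\dots,B_m)$. Since the $p$-values are independent and each $B_i$ depends only on $p_i$, the pairs $(p_i,B_i)$ are independent across $i$, so conditionally on $\{B=b\}$ the variables $p_i$, $i\in S$, are independent with $p_i$ distributed as $p_i$ given $p_i\le\lambda$; after rescaling by $1/\lambda$ and invoking the display above, the rescaled $p$-values $p_i/\lambda$, $i\in S$, are conditionally independent, with the ones indexed by true nulls ($i\in S\cap\mathcal T$) supra-uniform. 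By the definition of $\Plam$, on $\{B=b\}$ the procedure rejects nothing outside $S$ and on $S$ returns exactly $\P$ applied to the length-$|S|$ vector of the $p_i/\lambda$, $i\in S$ --- a testing problem whose true-null set is $S\cap\mathcal T$. Applying the theorem's hypothesis to this sub-problem (with $m$ replaced by $|S|$ and $\mathcal T$ by $S\cap\mathcal T$) gives that its decision rate $D$ is at most $\alpha$ on $\{B=b\}$.

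Finally, the coordinates outside $S$ all carry decision $0$, so they are neither rejections nor false rejections; hence for the FWER and the FDR --- and for any decision rate that is not increased by adjoining never-rejected hypotheses --- they leave the decision rate unchanged, and the conditional decision rate of $\Plam$ on $\{B=b\}$ is $\le\alpha$ as well. Taking the expectation over $B$ yields $D_{\Plam}\le\alpha$. (Degenerate cases are harmless: a true null with $P(p_i\le\lambda)=0$ is almost surely never rejected, and $S=\emptyset$ produces no rejections.) The step I expect to be the crux is the stability of supra-uniformity under conditional rescaling in the first paragraph --- short to verify, but exactly what allows the theorem's hypothesis on $\P$ to be re-entered on the sub-problem --- together with the bookkeeping that for FWER/FDR the size-$R_m(\lambda)$ sub-problem captures every possible error of $\Plam$; independence is used only to keep the surviving true-null $p$-values independent and supra-uniform after conditioning on $B$.
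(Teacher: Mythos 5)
Your proposal is correct and follows essentially the same route as the paper's proof: partition on the pattern of which $p$-values fall below $\lambda$, observe that conditionally the surviving rescaled true-null $p$-values remain independent and supra-uniform, invoke the assumed control of $\P$ on each such event, and finish with the law of total expectation. Your explicit verification that supra-uniformity is preserved under conditional rescaling is a welcome bit of extra care (the paper only asserts stochastic dominance of \SUD at that step), but it is not a different argument.
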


The proof of Theorem \ref{thm:independent} can be found in the Appendix. The basic idea behind the proof is to divide the space of $p$-values into orthants partitioned by the events ${[{p_i}\leq\lambda ]}$ versus ${[{p_i} > \lambda ]} $ for all $i$. Conditionally on each of these orthants, the FWER (or FDR) of $\Plam$ is at most $\alpha $. Therefore, the total FWER (or FDR) of $\Plam$ must also be at most $\alpha$. A similar argument is used by \citet{WD1986} in the context of order restricted inference.

Many popular multiple testing procedures satisfy the conditions of Theorem \ref{thm:independent}, since they only require a weaker condition $\mathrm{P}(p_i{}\leq{}c)\leq{}c$ in order to preserve type I error control. Consequently, for independent $p$-values the validity of the conditionalized versions of the methods by \cite{Holm1979}, \cite{Hommel1988}, \cite{Hochberg1988} for FWER control and by \cite{BH1995} for FDR control follows by Theorem \ref{thm:independent}.

\section{FWER control by the CBP under dependence: finitely many hypotheses}

Generalizing Theorem \ref{thm:independent} to the setting with dependent $p$-values is not trivial. In the next three sections of this paper we focus on the specific case of CBP, presenting several sufficient conditions the control of the FWER by the CBP. The overarching theme of these conditions is a requirement for $p$-values to be positively correlated. In light of the several special results obtained below we conjecture that, at least in the multivariate normal model, the CBP controls FWER whenever the $p$-values are positively correlated. Further justification for our conjecture is the extreme case when the $p$-values under the null are all identical at which point the proof of the control of the FWER by the CBP is trivial.

\subsection{Negative correlations: a counterexample}

To see that the requirement of independence of $p$-values in Theorem~\ref{thm:independent} cannot simply be dropped, consider a multiple testing problem with $m=2$ where $p_1$ and $p_2$ both have a \SUD and $p_1=1-p_2$. Assume $\lambda > 1/2$, since otherwise CBP is uniformly less powerful than the classical Bonferroni method. In this setting
\[
\FWERcb=\mathrm{P}(p_1\leq\lambda\alpha)+\mathrm{P}(p_2\leq\lambda\alpha)=2\lambda\alpha>\lambda.
\]
In other words, under the considered setting the CBP either fails to control FWER (with $\lambda>1/2$) or is strictly less powerful than the Bonferroni method (with $\lambda\leq1/2$).

\subsection{The bivariate normal case}

Proposition \ref{thm:bivnorm} below guarantees FWER control by the CBP for all $\alpha,\lambda\in(0,1)$ in the setting with two $p$-values corresponding to two bivariate zero-mean normally distributed test statistics with positive correlation. Denote as $\Phi$ the standard normal distribution function and as $I_2$ the $2\times 2$ identity matrix.




\begin{Proposition}
\label{thm:bivnorm}
Let $m=2$ and let $(X_1,X_2)'\sim{}N(0,\Sigma_\rho)$, where $\Sigma_\rho=\rho+(1-\rho)I_2$. Set $p_1=1-\Phi(X_1)$ and $p_2=1-\Phi(X_2)$. If $\rho\geq0$, then $\FWERcb\leq\alpha$.
\end{Proposition}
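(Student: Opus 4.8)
The plan is to reduce $\FWERcb$ to a closed form in two-dimensional Gaussian tail probabilities and then to prove the resulting scalar inequality by showing that it is monotone in $\rho$ on $[0,1]$ and appealing to Theorem~\ref{thm:independent} at $\rho=0$.

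\emph{Reduction.}\ Take $\mathcal T=\{1,2\}$ (both $p$-values are null under the stated model). Partition the probability space by the value of $R:=R_2(\lambda)$ and, on $\{R=1\}$, by which of $p_1,p_2$ lies below $\lambda$. On $\{R=0\}$ no rejection occurs, the threshold $\alpha\lambda/(R\vee1)=\alpha\lambda$ being below both $p$-values as $\alpha<1$; on $\{R=1\}$ only the $p$-value below $\lambda$ can be rejected, at threshold $\alpha\lambda$; on $\{R=2\}$ either can be rejected, at threshold $\alpha\lambda/2$. Assembling these using exchangeability of $(p_1,p_2)$, inclusion--exclusion and $\mathrm P(p_1<\alpha\lambda,\,p_2>\lambda)=\alpha\lambda-\mathrm P(p_1<\alpha\lambda,\,p_2\le\lambda)$ yields
\[
\FWERcb=2\alpha\lambda-2\,\mathrm P\!\big(\tfrac{\alpha\lambda}{2}\le p_1<\alpha\lambda,\; p_2\le\lambda\big)-\mathrm P\!\big(p_1<\tfrac{\alpha\lambda}{2},\; p_2<\tfrac{\alpha\lambda}{2}\big).
\]
Writing $\bar\Phi=1-\Phi$ and, for $t\in(0,1)$, $z_t=\Phi^{-1}(1-t)$ — so that $\{p_i<t\}=\{X_i>z_t\}$ almost surely — and setting $a=z_{\alpha\lambda}$, $a'=z_{\alpha\lambda/2}$, $b=z_\lambda$ (hence $b<a<a'$, $\bar\Phi(a)=\alpha\lambda$, $\bar\Phi(a')=\tfrac{\alpha\lambda}{2}$, $\bar\Phi(b)=\lambda$), this becomes
\[
\FWERcb=2\alpha\lambda-2\,\mathrm P(a<X_1\le a',\,X_2>b)-\mathrm P(X_1>a',\,X_2>a'),
\]
so $\FWERcb\le\alpha$ is equivalent to
\[
2\,\mathrm P(a<X_1\le a',\,X_2>b)+\mathrm P(X_1>a',\,X_2>a')\;\ge\;\alpha(2\lambda-1).
\]
For $\lambda\le\tfrac12$ the right side is $\le0$, so from now on $\lambda>\tfrac12$, i.e.\ $b<0$.

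\emph{Monotonicity in $\rho$.}\ Let $\phi_2(\cdot,\cdot;\rho)$ be the $N(0,\Sigma_\rho)$ density. Plackett's identity $\partial_\rho\,\mathrm P(X_1>s,X_2>t)=\phi_2(s,t;\rho)$ applied to the expression for $\FWERcb$ above gives
\[
\frac{d}{d\rho}\,\FWERcb=2\big(\phi_2(a',b;\rho)-\phi_2(a,b;\rho)\big)-\phi_2(a',a';\rho).
\]
Now $\phi_2(a',b;\rho)/\phi_2(a,b;\rho)=\exp\!\big(-(a'-a)(a'+a-2\rho b)/(2(1-\rho^2))\big)$, which is $\le1$ whenever $a'+a\ge2\rho b$; as $b<0\le\rho$, this holds for every $\rho\in[0,1]$ once $a'+a\ge0$, and a short quantile computation shows the latter is equivalent to $\alpha\lambda\le\tfrac23$. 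In that range $\tfrac{d}{d\rho}\FWERcb\le-\phi_2(a',a';\rho)<0$, so $\FWERcb\le\FWERcb\big|_{\rho=0}\le\alpha$, the last step by Theorem~\ref{thm:independent}, a standard uniform $p$-value being supra-uniform. (For $\alpha\lambda\le\tfrac12$ one can instead argue directly: positive quadrant dependence gives $\mathrm P(X_1>a',X_2>a')\ge\bar\Phi(a')^2=\alpha^2\lambda^2/4$, while conditioning on $X_1$ and using that $x\mapsto\mathrm P(X_2>b\mid X_1=x)$ is nondecreasing and $\ge\bar\Phi(b)=\lambda$ for $x\ge0$ gives $\mathrm P(a<X_1\le a',X_2>b)\ge\lambda\,\mathrm P(a<X_1\le a')=\alpha\lambda^2/2$, whereupon the displayed inequality reduces to $(1-\lambda)^2+\alpha\lambda^2/4\ge0$.)

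\emph{Main obstacle.}\ What remains is $\alpha\lambda>\tfrac23$, which forces $\alpha>\tfrac23$ and $\lambda>\tfrac23$ and places us in the configuration $b<a<0<a'$ with $|a|>a'$ and $|b|>a'$. Here $a'+a<0$, so on the sub-interval $[0,\rho_0)$ with $\rho_0:=(a+a')/(2b)\in(0,\tfrac12)$ the first bracket in $\tfrac{d}{d\rho}\FWERcb$ is positive and must be dominated by $\phi_2(a',a';\rho)$. I expect the technical heart of the proof to be exactly this comparison of the three Gaussian densities — exploiting that $a'$ is moderate ($0<a'<\Phi^{-1}(2/3)$) whereas $a$ and $b$ lie well below $0$, possibly after a further subdivision of $[0,\rho_0)$ — together with the observation that the inequality slackens as $\lambda\uparrow1$ (then $b\to-\infty$ and the band term tends to $\bar\Phi(a')$).
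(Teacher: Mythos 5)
Your reduction of $\FWERcb$ to quadrant probabilities and your Plackett-identity computation of $\tfrac{d}{d\rho}\FWERcb$ are both correct, and (after relabelling $b=z_0$, $a=z_1$, $a'=z_2$) they coincide exactly with the expression the paper derives in the appendix. Your density-ratio argument, showing the derivative is negative for all $\rho\in[0,1)$ whenever $a+a'\geq 2\rho b$, hence whenever $\alpha\lambda\leq 2/3$, is a somewhat cleaner route to the same conclusion as the paper's Lemma~\ref{lem.bivariate_normal}, which proves the equivalent inequality $h(\rho)\geq 2$ under the identical condition $\alpha\lambda\leq 2/3$; anchoring at $\rho=0$ via Theorem~\ref{thm:independent} is also how the paper concludes. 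The trivial dismissal of $\lambda\leq 1/2$ and the parenthetical direct argument for $\alpha\lambda\leq 1/2$ are likewise correct (the latter is redundant given the monotonicity argument).

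The genuine gap is the case $\alpha\lambda>2/3$, which you explicitly leave open: you describe the configuration $b<a<0<a'$ and say where you ``expect the technical heart of the proof'' to lie, but you do not carry out the required comparison of the three Gaussian densities on $[0,\rho_0)$, and your parenthetical PQD bound cannot be extended here because it relies on $a\geq 0$, i.e.\ $\alpha\lambda\leq 1/2$. The paper does not attempt to push the derivative argument into this region. Instead it closes it with two separate devices. First, Lemma~\ref{lem.bivariate_PQD} uses positive quadrant dependence in a cruder way: dropping the term $P(C^p)\geq 0$, bounding the on-diagonal quadrant $P(A^p)$ from below and the anti-diagonal quadrant $P(B^p)$ from above by their values under independence, which gives $\FWERcb\leq 1-(1-\tfrac{\lambda\alpha}{2})^2+2(1-\lambda)\lambda\alpha$; this covers all $\alpha\lambda\geq 0.69$ and most of $[2/3,0.69]$. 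Second, Lemma~\ref{lem.bivariate_normal_gap} verifies $h(0)\geq 2$ by explicit numerical bounds on the small residual set $\{\alpha\in[\tfrac{8}{9},1),\ \lambda\in[\tfrac{2}{3},\tfrac{3}{4}],\ \tfrac{2}{3}\leq\alpha\lambda\leq 0.69\}$, so that the derivative argument does after all apply there. Until you supply an argument of comparable force for $\alpha\lambda>2/3$ (your heuristic that the inequality ``slackens as $\lambda\uparrow 1$'' is not one), the proposition is proved only on part of the parameter range.
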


\subsection{Distributions satisfying the expectation criterion}

In the multivariate setting without independence of the $p$-values a number of conditions can be formulated which guarantee the control of the FWER by the CBP. One such sufficient condition is given in Lemma \ref{lem:expcrit}.

\begin{lemma}
\label{lem:expcrit}
 Let the $p$-values $p_1,\ldots,p_m$ have continuous distributions $F_1,\ldots,F_m$ that satisfy $(F_i(y) - F_i(x))\lambda \alpha \leq F_i(\lambda \alpha)(y-x)$ for any $x, y \in (0,\lambda \alpha)$ such that $x < y$, and let $P(R_m \le k\,|\,p_i=x)$ be increasing in $x$ for every $k=0,1,\ldots,m$ and $i=1,\ldots,m$ and let
\begin{align}
\label{eq:expcrit}
\sum_{i=1}^{m}{P(p_i\le\lambda\alpha)\,E(R_m^{-1}\,|\,p_i=\lambda\alpha)\le\alpha}.
\end{align}
Then $\FWERcb\le\alpha$.
\end{lemma}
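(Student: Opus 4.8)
The plan is to bound $\FWERcb$ directly using a union bound over the true null hypotheses and then convert the resulting sum into a conditional-expectation form that the hypothesis \eqref{eq:expcrit} controls. Write $\FWERcb = P\big(\bigcup_{i\in\mathcal T}[\,p_i < \alpha\lambda/(R_m\vee 1)\,]\big)$. By the union bound this is at most $\sum_{i\in\mathcal T} P(p_i < \alpha\lambda/(R_m\vee 1))$, and since we may harmlessly enlarge to $\sum_{i=1}^m$, it suffices to show $\sum_{i=1}^m P\big(p_i < \alpha\lambda/(R_m\vee 1)\big)\le\alpha$. Note that the event $[\,p_i < \alpha\lambda/(R_m\vee 1)\,]$ forces $p_i<\alpha\lambda\le\lambda$, so $p_i$ itself is among the ``surviving'' $p$-values and $R_m\ge 1$; hence $R_m\vee 1=R_m$ on this event.

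The next step is to disintegrate over the value of $p_i$. Conditioning on $p_i=x$ with $x\in(0,\lambda\alpha)$, we have
\begin{align*}
P\Big(p_i<\frac{\alpha\lambda}{R_m}\Big)
 = \int_0^{\lambda\alpha} P\Big(R_m < \frac{\alpha\lambda}{x}\,\Big|\,p_i=x\Big)\,dF_i(x).
\end{align*}
Here I would use the monotonicity assumption that $P(R_m\le k\mid p_i=x)$ is increasing in $x$ for each $k$: this lets me replace the awkward random threshold $\alpha\lambda/x$ by its value at the upper endpoint, but more usefully it shows $P(R_m\le k\mid p_i=x)\le P(R_m\le k\mid p_i=\lambda\alpha)$ for all $x\le\lambda\alpha$, so that conditionally on $p_i=x$ the variable $R_m$ stochastically dominates $R_m$ conditional on $p_i=\lambda\alpha$, and therefore $E(R_m^{-1}\mid p_i=x)\le E(R_m^{-1}\mid p_i=\lambda\alpha)$. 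The density-type condition $(F_i(y)-F_i(x))\lambda\alpha\le F_i(\lambda\alpha)(y-x)$ says precisely that on $(0,\lambda\alpha)$ the measure $dF_i$ is dominated by $\big(F_i(\lambda\alpha)/(\lambda\alpha)\big)\,dx$, i.e.\ that the conditional law of $p_i$ given $p_i\le\lambda\alpha$ is stochastically at least uniform on $(0,\lambda\alpha)$ and in fact has a sub-uniform ``density''. Combining these, I would bound
\begin{align*}
P\Big(p_i<\frac{\alpha\lambda}{R_m}\Big)
 \le \int_0^{\lambda\alpha} \frac{\alpha\lambda}{x}\cdot\frac{F_i(\lambda\alpha)}{\lambda\alpha}\,\bigl(\text{suitable reweighting}\bigr)\,dx,
\end{align*}
and a Fubini/integration-by-parts rearrangement together with $P(R_m\le k\mid p_i=x)\le P(R_m\le k\mid p_i=\lambda\alpha)$ should collapse the integral to $P(p_i\le\lambda\alpha)\,E(R_m^{-1}\mid p_i=\lambda\alpha)$. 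Summing over $i$ and invoking \eqref{eq:expcrit} finishes the argument.

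The main obstacle will be the middle step: making rigorous the passage from the integral $\int_0^{\lambda\alpha} P(R_m<\alpha\lambda/x\mid p_i=x)\,dF_i(x)$ to the clean product $P(p_i\le\lambda\alpha)\,E(R_m^{-1}\mid p_i=\lambda\alpha)$. The identity one wants is essentially $\sum_{k\ge1}\frac1k\,P(R_m=k\mid p_i=\lambda\alpha)=E(R_m^{-1}\mid p_i=\lambda\alpha)$, obtained by writing $P(R_m<\alpha\lambda/x)=\sum_k \mathbf 1\{x<\alpha\lambda/k\}P(R_m=k\mid\cdot)$ and integrating the indicator $\int_0^{\alpha\lambda/k}dx=\alpha\lambda/k$ against the uniform-dominated density $F_i(\lambda\alpha)/(\lambda\alpha)$, giving the factor $F_i(\lambda\alpha)/k$; the stochastic-ordering inequality is what licenses replacing the conditional law of $R_m$ given $p_i=x$ by the (more favorable) one given $p_i=\lambda\alpha$ uniformly in $x$. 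Care is needed that $R_m^{-1}$ is well-defined (handled by $R_m\ge1$ on the relevant event), that the continuity of the $F_i$ makes the disintegration valid, and that the inequalities all point the correct way after the union bound slack is absorbed.
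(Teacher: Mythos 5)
Your proposal is correct and follows essentially the same route as the paper's proof: a union bound, conditioning on the value of $p_i$, the monotonicity of $P(R_m\le k\mid p_i=x)$ in $x$ to move the conditioning to $p_i=\lambda\alpha$, the sub-uniform-density condition to bound $F_i(\lambda\alpha/k)$ by $F_i(\lambda\alpha)/k$, and the identity $\sum_{k\ge1} k^{-1}P(R_m=k\mid p_i=\lambda\alpha)=E(R_m^{-1}\mid p_i=\lambda\alpha)$. The only cosmetic difference is that the paper discretizes $(0,\lambda\alpha]$ into the intervals $(\lambda\alpha/(k+1),\lambda\alpha/k]$ and collapses the sum by an Abel-summation step, whereas you integrate continuously and apply Fubini to the indicator decomposition, which amounts to the same computation.
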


We refer to the condition in (\ref{eq:expcrit}) as the \emph{expectation criterion}. Note that for positively associated $p$-values, small values of $p_i$ often occur together with small values of $R_m^{-1}$; hence for small $\lambda \alpha$, the summand in the expectation criterion tends to be small. The expectation criterion can be used to prove a general result on $p$-values arising from equicorrelated jointly normal test statistics formulated as Lemma \ref{lem:equic_norm}. Note that if $p_1\ldots,p_m$ are exchangeable and standard uniform,  (\ref{eq:expcrit}) simplifies to $E(R_m^{-1}\mid p_1=\lambda\alpha)\le (\lambda m)^{-1}$. If further $p_1\ldots,p_m$ are derived from jointly normally distributed test statistics, Lemma \ref{lem:equic_norm} gives a further simplification of the condition.

\begin{lemma}
\label{lem:equic_norm}
Assume that $(\Phi^{-1}(p_1),\ldots, \Phi^{-1}(p_m))'\sim{}N(0,\Sigma_\rho)$, where $\Sigma_\rho=\rho+\diag(1-\rho,\ldots,1-\rho)$ with $0\leq\rho<1$. Let
\begin{equation}
\label{eq:integralcrit}
\int_{-\infty }^{\infty }{\frac{\varphi(x)}{\Phi (\mu-\sqrt{\rho}x)}dx}\leq\lambda^{-1},
\end{equation}
where $\mu=\Phi^{-1}(\lambda)(1-\rho)^{-1/2}-\Phi^{-1}(\lambda\alpha)\rho(1-\rho)^{-1/2}$. Then $\FWERcb\le\alpha$.
\end{lemma}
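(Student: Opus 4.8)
The plan is to derive Lemma \ref{lem:equic_norm} from Lemma \ref{lem:expcrit} by evaluating the expectation criterion explicitly in the one-factor Gaussian model. Under the stated assumptions each $\Phi^{-1}(p_i)$ is standard normal, so each $p_i$ is standard uniform and the vector $(p_1,\ldots,p_m)$ is exchangeable. For uniform marginals the Lipschitz-type hypothesis $(F_i(y)-F_i(x))\lambda\alpha\le F_i(\lambda\alpha)(y-x)$ of Lemma \ref{lem:expcrit} holds with equality, hence is vacuous, and, as the text already notes, the expectation criterion (\ref{eq:expcrit}) then reduces to the single inequality $E(R_m^{-1}\mid p_1=\lambda\alpha)\le(\lambda m)^{-1}$. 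The work thus splits into bounding $E(R_m^{-1}\mid p_1=\lambda\alpha)$ and verifying the monotonicity hypothesis of Lemma \ref{lem:expcrit}.

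For the bound I would use the representation $\Phi^{-1}(p_i)=\sqrt{\rho}\,Z+\sqrt{1-\rho}\,W_i$ with $Z,W_1,\ldots,W_m$ i.i.d.\ standard normal (legitimate since $0\le\rho<1$). Then $[p_j\le\lambda]=[W_j\le(\Phi^{-1}(\lambda)-\sqrt{\rho}Z)/\sqrt{1-\rho}]$, so conditionally on $Z=z$ the events $[p_2\le\lambda],\ldots,[p_m\le\lambda]$ are i.i.d.\ Bernoulli with success probability $q(z):=\Phi\big((\Phi^{-1}(\lambda)-\sqrt{\rho}z)/\sqrt{1-\rho}\big)$. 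Conditioning on $p_1=\lambda\alpha$ (which, as $\alpha<1$, entails $[p_1\le\lambda]$, hence $R_m\ge1$) makes $Z\sim N(\sqrt{\rho}\,\Phi^{-1}(\lambda\alpha),1-\rho)$, so $R_m=1+S$ with $S\mid Z=z\sim\mathrm{Binomial}(m-1,q(z))$. Using the classical identity $E[(1+S)^{-1}]=(1-(1-q)^{m})/(mq)$ for $S\sim\mathrm{Binomial}(m-1,q)$ gives
\[
E\big(R_m^{-1}\mid p_1=\lambda\alpha\big)=E\!\left[\frac{1-(1-q(Z))^{m}}{m\,q(Z)}\right]\le\frac{1}{m}\,E\!\left[\frac{1}{q(Z)}\right].
\]
Substituting $Z=\sqrt{\rho}\,\Phi^{-1}(\lambda\alpha)+\sqrt{1-\rho}\,V$ with $V\sim N(0,1)$, a short algebraic simplification shows that the argument of $\Phi$ inside $q(Z)$ becomes exactly $\mu-\sqrt{\rho}\,V$ with $\mu$ as in the statement, so $E[1/q(Z)]=\int_{-\infty}^{\infty}\varphi(x)/\Phi(\mu-\sqrt{\rho}x)\,dx\le\lambda^{-1}$ by hypothesis (\ref{eq:integralcrit}). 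Combining the last two facts gives $E(R_m^{-1}\mid p_1=\lambda\alpha)\le(\lambda m)^{-1}$, the reduced form of (\ref{eq:expcrit}).

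It remains to check that $P(R_m\le k\mid p_i=x)$ is increasing in $x$ for every $k$ and $i$; by exchangeability it suffices to treat $i=1$, and I expect this to be the main technical point. The argument again rests on the one-factor structure: $Z\mid p_1=x\sim N(\sqrt{\rho}\,\Phi^{-1}(x),1-\rho)$ is stochastically increasing in $x$; given $Z=z$ the count $\sum_{j\ge2}\boldsymbol{1}\{p_j\le\lambda\}$ is $\mathrm{Binomial}(m-1,q(z))$, which is stochastically decreasing in $z$ because $q$ is decreasing; and $\boldsymbol{1}\{p_1\le\lambda\}$ is the non-increasing deterministic function $\boldsymbol{1}\{x\le\lambda\}$. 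Coupling the latent factors then shows $R_m\mid p_1=x$ is stochastically non-increasing in $x$, both within each of the intervals $(0,\lambda]$ and $(\lambda,1)$ and across the jump at $x=\lambda$ (where dropping the $+1$ only reinforces the decrease); this is precisely the required monotonicity. With all hypotheses of Lemma \ref{lem:expcrit} verified, the conclusion $\FWERcb\le\alpha$ follows.
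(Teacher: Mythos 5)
Your proposal is correct and, for the main computation, follows essentially the same route as the paper: both reduce the claim to the expectation criterion of Lemma \ref{lem:expcrit}, note that uniform marginals make the Lipschitz-type hypothesis automatic and collapse (\ref{eq:expcrit}) to $E(R_m^{-1}\mid p_1=\lambda\alpha)\le(\lambda m)^{-1}$, and then verify this via the one-factor representation $\Phi^{-1}(p_i)=\sqrt{\rho}\,\Theta+\sqrt{1-\rho}\,\varepsilon_i$, the conditional Bernoulli structure with success probability $\Phi(\mu-\sqrt{\rho}\,\cdot)$, the inverse-moment bound $E[(1+S)^{-1}]\le 1/(mq)$ for $S\sim\mathrm{Binomial}(m-1,q)$ (the content of the paper's auxiliary binomial lemma, which you re-derive inline), and finally integration against the standard normal density to invoke (\ref{eq:integralcrit}). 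The one place where you genuinely diverge is the verification that $P(R_m\le k\mid p_i=x)$ is increasing in $x$: the paper disposes of this by observing that the $p_i$ are multivariate totally positive of order 2 and citing Theorem 4.1 of Karlin and Rinott (1980) applied to the increasing function $\boldsymbol{1}[R_m\le k]$, whereas you give a self-contained stochastic-domination argument, coupling the latent factor $Z\mid p_1=x$ (stochastically increasing in $x$), the conditionally binomial count (stochastically decreasing in $Z$ since $q$ is decreasing), and the deterministic indicator $\boldsymbol{1}\{x\le\lambda\}$, including the jump at $x=\lambda$. Your coupling is more elementary and avoids importing MTP$_2$ machinery, at the cost of a slightly longer argument; the paper's citation is terser but leans on an external theorem. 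Both are valid, and your algebra identifying the argument of $\Phi$ with $\mu-\sqrt{\rho}x$ checks out.
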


A practical implication of Lemma \ref{lem:equic_norm} is that for a given setting $(\lambda, \alpha, \rho)$ the FWER control by the CBP can be verified numerically by evaluating the one-dimensional integral in (\ref{eq:integralcrit}). The results of our numerical analysis suggest that (\ref{eq:integralcrit}) holds for any $0\leq\rho<1$ and any $0<\lambda\leq1$ whenever $\alpha\leq0.368$. Moreover, we observed evidence that condition (\ref{eq:integralcrit}) is most certainly too strict: Certain combinations of $\alpha$, $\lambda$ and $\rho$ (e.g.\ $\alpha=0.7$, $\lambda=0.9$, $\rho=0.2$) exist for which (\ref{eq:integralcrit}) is violated, but simulations indicate that the CBP controls the FWER for all $\alpha$, $\lambda$ and $\rho \geq 0$ in the case of $p$-values arising from equicorrelated normals with nonnegative means and unit variances.

\subsection{Mixtures}


 Next we show that the control of the type I error rate by the CBP is preserved when distributions are mixed. The result below applies when the family of distributions being mixed is indexed by a one-dimensional parameter, however, it can be easily generalized to more complex families of distributions.

\begin{Proposition}
\label{lem:mixtures}
 Let $\mathcal{F}=\{F_w,w\in\R\}$ be a family of distribution functions. Assume that a decision rate $D$ and an MTP $\P$ satisfy $D_{\P} \leq\alpha$ whenever the joint distribution of the $p$-values is in $\mathcal{F}$. For any mixing density $g$, if the $p$-values $p_1,\ldots,p_m$ are distributed according to the mixture $F=\int_{-\infty}^\infty{}F_w g(w)dw$, then $D_{\P} \leq\alpha$.
\end{Proposition}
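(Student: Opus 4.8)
The plan is to exploit the fact that a decision rate is an \emph{affine} functional of the joint law of the $p$-values, and then to realize the mixture $F$ through a latent variable. By definition $D_\P$ is the expectation of some fixed function of the decision vector $\P(p_1,\ldots,p_m)$; since $\P$ is deterministic, this function is a fixed measurable function $\phi$ of $(p_1,\ldots,p_m)$ alone, and for the FWER, the FDR, and the other decision rates of interest it is bounded (in fact $[0,1]$-valued), hence integrable under every joint law. Write $D_\P(Q)$ for the value $\int\phi\,dQ$ of the decision rate when $(p_1,\ldots,p_m)$ has joint law $Q$; with this notation the hypothesis of the proposition reads $D_\P(F_w)\le\alpha$ for every $w\in\R$.

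First I would introduce a random variable $W$ with density $g$ and declare that, conditionally on $W=w$, the vector $(p_1,\ldots,p_m)$ has law $F_w$ (this is legitimate precisely because $w\mapsto F_w$ is the Markov kernel implicit in the very definition of the mixture $F=\int F_w g(w)\,dw$). Then the marginal law of $(p_1,\ldots,p_m)$ in this enlarged model is exactly $F$, so $D_\P$ computed under $F$ equals $\E[\phi(p_1,\ldots,p_m)]$ in this model. Applying the tower property,
\[
D_\P \;=\; \E\big[\phi(p_1,\ldots,p_m)\big] \;=\; \E\Big[\,\E\big[\phi(p_1,\ldots,p_m)\,\big|\,W\big]\,\Big] \;=\; \E\big[D_\P(F_W)\big],
\]
since conditionally on $W=w$ the inner expectation is $\int\phi\,dF_w = D_\P(F_w)$. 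Finally, invoking $D_\P(F_w)\le\alpha$ for all $w$ yields $\E[D_\P(F_W)]\le\alpha$, hence $D_\P\le\alpha$.

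There is no real obstacle here; the only points that deserve a line of care are measurability and integrability. Boundedness of $\phi$ makes the interchange of expectation and mixing automatic (this is the Fubini step hidden in the tower property) and simultaneously ensures that $w\mapsto D_\P(F_w)$ is measurable, so that $\E[D_\P(F_W)]$ is well defined; for an unbounded but $F$-integrable $\phi$ one would first check $\int\!\int|\phi|\,dF_w\,g(w)\,dw<\infty$ via Tonelli and then argue as above. The generalization alluded to in the text — replacing the one-dimensional family $\{F_w,w\in\R\}$ by a family indexed by an arbitrary parameter space — is obtained verbatim by replacing the scalar mixing density $g$ with a probability measure on that space, the computation being identical.
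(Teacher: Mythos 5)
Your argument is correct and is exactly the paper's proof, which condenses the same reasoning into one line: by the law of total expectation, $E(D_{\P}) = E(E(D_{\P}\mid w)) \leq \alpha$. Your added remarks on measurability and boundedness of the decision-rate functional are sound but not needed beyond what the paper states.
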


 Proposition \ref{lem:mixtures} is a general result that applies to many conditionalized procedures. For instance, in the case of the CBP, which controls the FWER whenever the $p$-values are independent and supra-uniform, the proposition guarantees the control of the FWER by the CBP also for mixtures of such distributions. Note that such mixtures may be correlated.

\section{FWER control by the CBP under dependence in large testing problems}

Finally, we give a sufficient conditions for FWER control by the CBP as the number of hypotheses $m$ approaches infinity. Suppose for a moment that the expectation of $R(\lambda)$ (i.e. the number of $p$-values below $\lambda$) is known. In such case one could use the alternative to CBP that rejects hypothesis $H_i$ whenever $p_i\leq\lambda\alpha/\E[R(\lambda)]$. If the $p$-values are supra-uniform then under arbitrary dependence this procedure controls FWER, since it holds
\begin{align*}
\mathrm{FWER}_{\mathrm{CBP'}}&\leq\sum\nolimits_{i\in T} P(p_i<\alpha\lambda/E[R(\lambda)]) \\
&\leq\sum\nolimits_{i\in T} P(p_i\leq\lambda)\,P(p_i<\alpha\lambda/E[R(\lambda)]\mid p_i\leq\lambda) \\
&\leq\sum\nolimits_{i\in T} P(p_i\leq\lambda)\,\alpha/E[R(\lambda)]
\leq\alpha.
\end{align*}
This suggests that the CBP should also control FWER for $m\to\infty$ whenever $R(\lambda)$ is an consistent estimator of $\E[R(\lambda)]$. This heuristic argument is formalized in Proposition \ref{thm:exp_bonf}, where $\plim$ denotes convergence in probability.

\begin{Proposition}
\label{thm:exp_bonf}
Let the $p$-values $p_1,\ldots,p_m$ have supra-uniform distributions and let $\plim_{m\to\infty}\,R/m=\eta$ and $\lim_{m\to\infty}\,E(R/m)=\eta$ for some $\eta\in\R$. Then $\limsup_{m\to\infty}\,\FWERcb\leq\alpha$.
\end{Proposition}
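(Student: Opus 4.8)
The plan is to follow the heuristic already sketched in the excerpt, turning the informal argument about $R/m$ being a consistent estimator of its mean into a rigorous $\limsup$ bound. First I would write, for any fixed $\varepsilon > 0$, the decomposition of $\FWERcb$ according to whether the observed ratio $R/m$ falls into the ``good'' event $A_{m,\varepsilon} = \{R/m \geq \eta - \varepsilon\}$ or its complement. On $A_{m,\varepsilon}^c$ we simply bound the probability of a family-wise error by $P(A_{m,\varepsilon}^c)$, which tends to $0$ by the assumption $\plim_{m\to\infty} R/m = \eta$. On $A_{m,\varepsilon}$ we have $R \vee 1 \geq m(\eta - \varepsilon)$ (at least once $m$ is large enough that $m(\eta-\varepsilon) \geq 1$; note $\eta > 0$ must hold since $R/m \le 1$ and $\E(R/m)\to\eta$ with $R$ positive whenever any $p_i \le \lambda$ — if $\eta = 0$ the statement needs separate, easy handling, which I would dispatch first).

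Second, on the event $A_{m,\varepsilon}$ the CBP rejection threshold $\alpha\lambda/(R\vee 1)$ is at most $\alpha\lambda/(m(\eta-\varepsilon))$, a \emph{deterministic} bound. Hence by a union bound over the true null indices $i \in \mathcal{T}$ and supra-uniformity (used exactly as in the displayed $\mathrm{CBP}'$ computation: $P(p_i < c) = P(p_i \le \lambda)P(p_i < c \mid p_i \le \lambda) \le \lambda \cdot (c/\lambda) = c$ for $c \le \lambda$, so each $p_i$ is stochastically at least uniform at small thresholds), one gets
\begin{align*}
P\Big(\bigcup_{i\in\mathcal{T}}\big[p_i < \tfrac{\alpha\lambda}{R\vee 1}\big] \cap A_{m,\varepsilon}\Big)
&\leq \sum_{i\in\mathcal{T}} P\big(p_i < \tfrac{\alpha\lambda}{m(\eta-\varepsilon)}\big)
\leq |\mathcal{T}|\,\frac{\alpha\lambda}{m(\eta-\varepsilon)}
\leq \frac{\alpha\lambda}{\eta-\varepsilon},
\end{align*}
using $|\mathcal{T}| \le m$ and, for $m$ large, $\alpha\lambda/(m(\eta-\varepsilon)) \le \lambda$ so that supra-uniformity applies termwise. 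Combining the two pieces yields $\FWERcb \leq \alpha\lambda/(\eta-\varepsilon) + P(A_{m,\varepsilon}^c)$, so $\limsup_{m\to\infty}\FWERcb \leq \alpha\lambda/(\eta-\varepsilon)$ for every $\varepsilon>0$, and letting $\varepsilon \downarrow 0$ gives $\limsup_{m\to\infty}\FWERcb \leq \alpha\lambda/\eta$.

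Third — and this is where the hypothesis $\lim_{m\to\infty}\E(R/m) = \eta$ must be doing its work, since the bound $\alpha\lambda/\eta$ above is not yet $\alpha$ — I need to show $\lambda/\eta \le 1$, i.e. $\eta \ge \lambda$. This should follow from supra-uniformity at the single threshold $\gamma = \lambda$ itself: taking $\gamma = \lambda$ in Definition~\ref{def:supre-uniform} is degenerate, so instead I would use that each true-null $p_i$ satisfies $P(p_i \le \lambda) \ge \lambda$ (a standard consequence: supra-uniformity relative to \SUD forces the marginal to be stochastically smaller than uniform, equivalently $F_i(\lambda) \ge \lambda$ — indeed $F_i$ convex on $(0,1)$ with $F_i(1^-)=1$ lies above the diagonal). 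Wait — that gives $F_i(\lambda) \le \lambda$, the wrong direction; a convex CDF lies \emph{below} the chord, so $F_i(\lambda) \le \lambda$, meaning $\E(R/m) = m^{-1}\sum_i P(p_i\le\lambda)$ could be less than $\lambda$ and the argument as stated would only give $\limsup \FWERcb \le \alpha\lambda/\eta$ which can exceed $\alpha$. So the genuine subtlety is that I must \emph{not} discard the factor $P(p_i \le \lambda)$ prematurely: redo step two keeping it, writing $P(p_i < \alpha\lambda/(R\vee 1), A_{m,\varepsilon}) \le P(p_i \le \lambda)\cdot \alpha/(m(\eta-\varepsilon))$ via supra-uniformity, then sum to get $\le \E(R/m)\cdot\alpha/(\eta-\varepsilon) \to \eta\alpha/(\eta-\varepsilon)$ using $\lim \E(R/m) = \eta$, and finally let $\varepsilon\downarrow 0$ to obtain exactly $\alpha$.

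So the main obstacle is precisely this bookkeeping: one must carry the $P(p_i \le \lambda)$ factors through the union bound so that their average converges (by the second hypothesis) to $\eta$, which then cancels the $\eta$ in the denominator coming from the first hypothesis. A secondary technical point is handling the regime of small $m$ and the degenerate case $\eta = 0$ (there $R \vee 1 = 1$ eventually forces each threshold to $\alpha\lambda$, and $\E(R/m)\to 0$ forces $P(p_i\le\lambda)\to 0$ on average, so the sum still vanishes — this case is in fact easiest). I would also need to remark that $\alpha\lambda/(m(\eta-\varepsilon)) \le \lambda$ holds for all large $m$, legitimizing the termwise use of supra-uniformity at threshold $\gamma = \alpha\lambda/(m(\eta-\varepsilon))$ against $\lambda$.
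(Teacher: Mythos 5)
Your final argument is correct and is essentially identical to the paper's proof: the same decomposition on the event $\{R/m\ge\eta-\varepsilon\}$, the same union bound with the factor $P(p_i\le\lambda)$ retained so that supra-uniformity bounds the conditional probability and the sum converges to $\eta\alpha/(\eta-\varepsilon)$ via $\lim_m E(R/m)=\eta$. The bookkeeping subtlety you caught mid-proof (not discarding $P(p_i\le\lambda)$ prematurely) is precisely the step the paper's proof relies on; your remark that $\eta=0$ needs separate treatment is fair, since the paper's own proof also implicitly assumes $\eta>0$ when dividing by $\eta-\varepsilon$.
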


An application of Proposition~\ref{thm:exp_bonf} in a situation where correlations between $p$-values vanish with $m\to\infty$ leads to Corollary \ref{corol:asymptotic}.

\begin{Corollary}
\label{corol:asymptotic}
Denote $\rho_{ij}=\cor(\mbf{1}[p_i\leq\lambda],\mbf{1}[{{p}_{j}}\leq\lambda])$ and put $\rho_{ij}^+=\max\{0,\rho_{ij}\}$. Denote the average off-diagonal positive part of the correlations as
\begin{align*}
\bar\rho(m)=\frac{2}{m(m-1)}\sum\limits_{i=1}^{m-1}{\sum\limits_{j=i+1}^{m}{{{\rho }_{ij}^{+}}}}.
\end{align*}
If the $p$-values are supra-uniform and $\lim_{m\to\infty}\,E(R/m)=\eta$ for some $\eta\in\R$ and $\lim_{m\to\infty}\bar\rho(m)=0$, then $\limsup_{m\to\infty}\FWERcb\leq\alpha $.
\end{Corollary}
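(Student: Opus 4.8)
The plan is to deduce Corollary \ref{corol:asymptotic} from Proposition \ref{thm:exp_bonf} by verifying its hypotheses. Proposition \ref{thm:exp_bonf} requires two things beyond supra-uniformity: that $\plim_{m\to\infty} R/m = \eta$ and that $\lim_{m\to\infty} E(R/m) = \eta$. The second is assumed directly in the Corollary, so the whole task reduces to showing that the vanishing average positive correlation condition $\lim_{m\to\infty}\bar\rho(m)=0$, together with $\lim_{m\to\infty}E(R/m)=\eta$, forces $R/m \to \eta$ in probability. Since convergence in $L^2$ implies convergence in probability and we already have the means converging to $\eta$, it suffices to show $\var(R/m)\to 0$.

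First I would write $\var(R/m) = m^{-2}\var(R) = m^{-2}\sum_{i,j}\cov(\mbf{1}[p_i\le\lambda],\mbf{1}[p_j\le\lambda])$. The diagonal terms contribute $m^{-2}\sum_i \var(\mbf{1}[p_i\le\lambda]) \le m^{-2}\cdot m \cdot \tfrac14 = (4m)^{-1}\to 0$, using that a Bernoulli variance is at most $1/4$. For the off-diagonal terms, bound each covariance by $\cov(\mbf{1}[p_i\le\lambda],\mbf{1}[p_j\le\lambda]) = \rho_{ij}\sqrt{\var(\mbf{1}[p_i\le\lambda])\var(\mbf{1}[p_j\le\lambda])} \le \rho_{ij}^+ \cdot \tfrac14$, again using the $1/4$ bound on each standard deviation and $\rho_{ij}\le\rho_{ij}^+$. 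Summing, the off-diagonal contribution is at most $m^{-2}\cdot\tfrac14\cdot 2\sum_{i<j}\rho_{ij}^+ = m^{-2}\cdot\tfrac14\cdot m(m-1)\bar\rho(m) \le \tfrac14\bar\rho(m)$, which tends to $0$ by hypothesis. Hence $\var(R/m)\to 0$, so $R/m \to \eta$ in probability.

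With $\plim_{m\to\infty} R/m = \eta$ and $\lim_{m\to\infty}E(R/m)=\eta$ both established, and the $p$-values being supra-uniform by assumption, Proposition \ref{thm:exp_bonf} applies verbatim and yields $\limsup_{m\to\infty}\FWERcb \le \alpha$, which is the desired conclusion.

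I do not anticipate a serious obstacle here; the argument is a routine second-moment computation feeding into the already-established Proposition. The only point requiring a little care is the covariance bound: one must be careful that $\rho_{ij}$ could be negative, which is precisely why the statement replaces it with $\rho_{ij}^+$ — a negative correlation only helps reduce the variance, so discarding it (replacing $\rho_{ij}$ by $\rho_{ij}^+\ge\rho_{ij}$) is safe for an upper bound. A secondary subtlety is that $\eta$ must lie in an admissible range for Proposition \ref{thm:exp_bonf} to be non-vacuous, but this is inherited from that Proposition and need not be re-examined. Everything else is bookkeeping.
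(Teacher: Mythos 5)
Your proposal is correct and follows essentially the same route as the paper: decompose $\var(R/m)$ into a diagonal part bounded by $(4m)^{-1}$ and an off-diagonal part bounded by $\tfrac14\bar\rho(m)$ via the positive parts of the correlations, conclude $\plim_{m\to\infty}R/m=\eta$, and invoke Proposition~\ref{thm:exp_bonf}. The only cosmetic slip is the phrase ``the $1/4$ bound on each standard deviation''---each Bernoulli standard deviation is at most $1/2$, and it is their product that is at most $1/4$---but the bound you actually use is the right one.
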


An example of the usage of Corollary~\ref{corol:asymptotic} for data analysis can be found in Section \ref{manifest}.

\section{FWER investigation - simulations}


Our conjecture is that the CBP controls the FWER in the case of positively correlated multivariate normal test statistics. To substantiate this, we conducted the following simulations. We generated the $p$-values as $p_i = \Phi(Z_i)$, with the 'test statistics'   $(Z_1, \ldots, Z_m)'\sim{}N(0,\Sigma) $ and $\Sigma = (\sigma_{ij})$ with each $\sigma_{ij} \ge 0$ and each $\sigma_{ii} =1$. The correlation matrices were generated in the following way. First, a covariance matrix was generated as $\Sigma = AA^T$, where the $a_{ij}$ were drawn randomly and independently from a standard normal distribution. If there was a negative covariance, then the smallest covariance was found, and the corresponding negative elements of $a_{ij}$ were set to 0. This was repeated until all covariances were nonnegative. Finally, the covariance matrix was scaled into a correlation matrix. For each $m\in\{1,2,3,4,5,6,7,8,9,10,15,20,25,50,75, 100\}$ we generated 100 correlation matrices $\Sigma$, and for each $\Sigma$ we conducted 10,000 simulations and computed the FWER for the CBP with $\alpha \in {0.05,0.10,\ldots,0.95}$ and $\lambda\in\{0.1,0.2,\ldots,0.9\}$. There were 6 combinations of $(\Sigma,\alpha, \lambda)$ with simulated FWER slightly above $\alpha$, but none of these differences were significant according to a binomial test with significance level 0.05. For $m \ge 6$ we found no cases with simulated FWER above $\alpha$.



In our simulations we also explored several multivariate settings with negative correlations (results not included). The simulations showed what was already suggested by the lack of control of the FWER for negative correlations in the bivariate case, namely that FWER is not necessarily controlled with negative correlations (especially for small $m$).

\section{Power investigation - simulations}

In this section we investigate the power performance of conditionalized tests relative to their non-conditionalized versions through simulations. We consider the following procedures: Bonferroni; {\v S}id{\'a}k \citep[attributed to Tippet by][p.~2433]{Davidov2011}; Fisher combination method based on the transformation $F=-2\sum_{i=1}^m\log {p_i}$  \citep[see][p.~2433]{Davidov2011}; the likelihood ratio (LR) procedure based on the theory of order restricted statistical inference of \citet{RobertsonWD}, using the chi-bar distribution with binomial weights; the ${I_+}$ statistic, based on the empirical distribution function \citep[p.~2433]{Davidov2011}; the Bonferroni plug-in procedure as defined by \citet{FG2009} based on the work of \citet{Storey2002} referred to as the FGS procedure.

It should be noted that neither Fisher's method nor the LR and ${I_+}$ procedures provide a decision for each individual hypothesis ${H_i}$. Instead they only allow a conclusion about the intersection hypothesis that all ${H_i}$ are true (i.e. the global null hypothesis). For this reason their usage is limited. Furthermore, both Fisher's and {\v S}id{\'a}k's method as well as the LR and ${I_+}$ procedures assume independence of the analyzed $p$-values. However, this assumption is often violated in practice, which limits the usage of these methods.

\subsection{Power as the number of true hypotheses increases}

For the power investigation the $p$-values were generated based on $m$ parallel \textit{z}-tests of null hypotheses of type $H_0:\mu_i\geq0$, each based on a sample of size $n$. The $p$-values were calculated as $p_i=\Phi(X_i)$, with $var(X_i) = 1$ and noncentrality parameter $E(X_i)=\mu_i \sqrt{n}$. To each set of $p$-values we applied the conditionalized and the ordinary versions of the considered testing procedures at the overall significance level of $\alpha=.05$. Conditionalizing was applied with $\lambda=0.5$. A number of combinations in terms of noncentrality parameter, hypothesis count and proportion of false hypotheses was considered. For each combination we performed 10,000 replications.

Fig.~\ref{fig:figureM} shows the results of a simulation where the number of false hypotheses is fixed while the number of true hypotheses increases. For the false hypotheses the value of the noncentrality parameter was set at $-2$, while for the true null hypotheses it was set at $2$. This illustrates that the power decreases rapidly with the number of true hypotheses for most non-conditionalized procedures. The only exception to this is the LR procedure. In contrast, for all of the considered conditionalized procedures the power decreases much more slowly. This shows that, with the exception of the LR procedure, the conditionalization substantially improves the power performance of the considered procedures in this setting. Among the procedures that permit a per-hypothesis decision (i.e. Bonferroni, {\v S}id{\'a}k, and FGS) it is the conditionalized FGS procedure that shows the highest power.

Fig.~\ref{fig:figureP2} illustrates the influence of conditionalization on the performance of the Bonferroni and FGS methods in a setting where the percentage of true hypotheses increases while the total number of hypotheses remains fixed. The figure shows that the conditionalized FGS procedure is the overall best performing procedure among the four.

\subsection{Power in pairwise comparisons of ordered means}
\label{ordered}

Consider a series of independent sample means $y_i \sim N(\mu_i,\sigma^2/n)$ with the compound hypothesis $H:\mu_1 \leq\mu_2\leq\ldots\leq\mu_k$. An analysis method specifically designed for this setting is the isotonic regression \citep{RobertsonWD}, although this method does not allow to deduce specifically which pairs $(\mu_i,\mu_j)$ violate the ordering specified by the null hypothesis. Alternatively, the $k(k-1)/2$ individual hypotheses $H_{ij}:\mu_i\leq\mu_j$ with $i < j$ can be analyzed using one-sided \textit{t}-tests, and the conditionalized Bonferroni or the conditionalized FGS procedures can be applied. The average correlation between the $p$-values vanishes as $m \to \infty$, thus the asymptotic control of the FWER by the CBP follows by Corollary~\ref{corol:asymptotic}. The simulations below indicate that the FWER is in fact controlled even for the small hypothesis counts. 

The means in this simulation were modeled as $\mu_{i+1}=\mu_i+\delta$ for $i=1,2,\ldots,k-2$, and $\mu_k = \mu_1$. Thus, most means satisfy the ordering of the hypothesis, but the last mean violates it. We used $\sigma = 1$, $n = 10$ and set $\lambda = 0.5$. Fig.~\ref{fig:figurePairs} shows the results for $k = 20$ and $k = 5$ respectively. At $\delta = 0$, it is observed that all four procedures exhibit $\FWER$ below $\alpha$. For both $k = 5$ and $k = 20$, the two conditionalized procedures perform essentially as good or better than their non-conditionalized counterparts across the whole range of $\delta \in [0.1, 3]$. Note that the same results would be obtained with, for example, $n = 90$ and $\delta \in [1/30, 1]$.

\section {Examples with real data}
\subsection{Example 1 (detecting adverse effects in meta-analysis)}

Suppose the effect of a medical or psychological treatment is investigated in a meta-analysis of $m$ studies in distinct populations, and that a one-sided \textit{t}-test of $H_i^*: \vartheta_i \le 0$ is conducted in each population, yielding $p$-values $p^*_i$, where $\vartheta_i > 0$ indicates a positive, beneficial effect of the treatment in population $i$. In a meta-analysis, one would usually test a weighted average of the effects, say $\bar{\vartheta}=\sum\nolimits_iw_i\vartheta_i$. However, even if the average effect $\bar{\vartheta}$ is positive, there can be populations in which the local effect $\vartheta_i$ is negative. It would be wise to test for such adverse effects, as the treatment should not be recommended in such populations. This means that one also has to test the opposite hypothesis, $H_i: \vartheta_i \ge 0$, in each population, yielding $p$-values $p_i = 1 - p^*_i$ . This is a problem of the form considered in this article.

Under the classical \textit{t}-test applied in the context of interval hypothesis testing, each \textit{t}-statistic has a non-central \textit{t}-distribution with the non-centrality parameter determined by the true value of the expectation $\vartheta_i$. Under the principle of least favorability the $p$-values are obtained using the central \textit{t}-distribution. It is well-known that the ratio of densities of \textit{t}-distributions with different non-centrality parameters (and equal degrees of freedom) is monotone \citep{Kruskall1954,Lehmann1955}, and from this it follows that they are ordered in likelihood ratio. This implies that the supra-uniformity of Definition \ref{def:supre-uniform} applies \citep{Whitt1980, Whitt1982, Denuit2005}. Consequently, assuming that the $p$-values are independent, by Theorem \ref{thm:independent} it follows that in this setting the CBP and the conditionalized FGS procedure control the FWER, while the conditionalized Benjamini-Hochberg procedure \citep{BH1995,BY2001} controls the FDR.

\subsection{Example 2. Detecting substandard organizations in quality benchmarking}

Several countries have developed programs in which the quality of public organizations such as schools or hospitals is assessed. As stated by \citet{Ellis2013}, "such research can consist of large-scale studies where dozens [3], hundreds [4], or thousands [5, 6] of organizations are compared on one or more measures of performance or quality of care, on the basis of a sample of clients or patients from each organization". A goal of such programs is to identify under-performing organizations. For example, in the Consumer Quality Index (CQI) program of the Netherlands, the questionnaire used in 2010 to evaluate the short-term ambulatory mental health and addiction care organizations contained a question whether it was a problem to contact the therapist by phone in the evening or during the weekend in case of emergency. Now suppose that a minimum standard of 90\% satisfaction rate is imposed. Under such standard in each organization 90\% or more of the patients should answer that contacting the therapist outside office hours was not a problem. Investigating whether hospitals satisfy this minimum standard can be done using the binomial test within each hospital with the null hypothesis of type $H_0:\pi\ge.90$ where $\pi$ denotes the success rate. The advisory statistics team debated the question of whether a correction for multiplicity for all hospitals is required in such analyses. The arguments against correcting for all hospitals were motivated by the expected loss of power associated with multiplicity correction for all hospitals in non-conditionalized MTPs. The advantage of using a conditionalized MTP in such setting is that the presence of organizations that score high above the minimum standard does not exacerbate the severity of the multiple testing problem and much of the power is preserved even with many high-performing hospitals included in the analysis.

\subsection{Example 3. Testing for manifest monotonicity in IRT}\label{manifest}

In Mokken scale analysis it is recommended to test manifest monotonicity \citep{Ark2007}. With $k$ items to be tested suppose that the variables $X_1,\ldots,X_k$ indicate correctness of response for the $k$ items (with $X_i=1/0$ indicating a correct/incorrect answer for the $i$-th item). Denote the \textit{rest score} of the $i$-th item as $X_{-i}=(\sum_{j=1}^{k}X_j)-X_i$. A question of interest is whether $\pi_{ij}:=P(X_i=1\,|\,X_{-i}=j)$ is a nondecreasing function of $j$ within each item $i$. This leads to testing the $k(k-1)/2$ pairwise hypotheses $\pi_{ij'}\leq\pi_{ij}$ for $j'<j$ \citep{Ark2007}. In the subtest E of the Raven Progressive Matrices test in the data set reported by \citet{VE2000} we obtained the following result. For item 11, there were 21 pairs of rest score groups that had to be compared - small adjacent groups were joined together by the program. There were 4 violations with a maximum  \textit{z}-statistic of 2.33, yielding an unmodified $p$-value $p = 0.010$. If no multiplicity correction is performed the probability of false rejection for each item undesirably increases with the number of rest score groups. The classical Bonferroni correction yields the adjusted $p$-value of $p'=0.010\times 21=0.21$, while the CBP yields the adjusted $p$-value of $p''=0.010\times 4=0.04$. As the number of items increases, the number of pairwise comparisons increases, but the average correlation between the $z$-statistics vanishes. In this situation, Corollary~\ref{corol:asymptotic} implies that the FWER is asymptotically under control, while the simulations of Section \ref{manifest} indicate that FWER control is already achieved with small hypothesis count. Thus, both the classical Bonferroni correction and the CBP control the FWER, but the CBP yields the smallest $p$-value.

\subsection{Example 4. Testing for nonnegative covariances in IRT}

In Mokken scale analysis and, more generally in monotone latent variable models, it is required that the test items have nonnegative covariances with each other \citep{Mokken1971,HR1986}. Two approaches are possible in item selection with this requirement. One approach is to retain only items with significantly positive covariances, and the other approach is to delete items with significantly negative covariances. We consider the latter approach here. The distribution of the standardized sample covariances converges to a normal distribution with increasing sample size, which suggests that the CBP might control the FWER in this setting. We investigated this further, both analytically and with simulations. Both approaches suggest that the FWER is indeed under control, but we intend to report the details of this in a psychometric journal. Here, consider only briefly an example. We have deployed this procedure on an exam with 78 multiple choice questions. There were 3003 covariances between items, of which 280 were negative. The smallest unadjusted $p$-value was 0.000243. The Bonferroni corrected $p$-value is 0.73, while the CBP with $\lambda=0.5$ yields a $p$-value of 0.14.

\section{Discussion}

We have proposed a very simple and general method, called conditionalization, to deal with  the presence of inflated $p$-values in multiple testing problems. Such $p$-values often arise in practice for instance when interval hypotheses are tested. We suggest to discard all hypotheses with $p$-values above a pre-chosen constant $\lambda$ (typically 0.5 or higher), and to divide the remaining $p$-values by $\lambda$ before applying the multiple testing procedure of choice. For independent $p$-values, we have proven that the conditionalized procedure controls the same error rate as the original procedure, provided null $p$-values are supra-uniform (i.e.\ dominate the standard uniform distribution in likelihood ratio order). As a rule of thumb, conditionalized procedures can be expected to be more powerful than their ordinary, non-conditionalized counterparts if there are more true hypotheses with inflated $p$-values (i.e.\ with true parameter values deep inside the null hypothesis) than there are false null hypotheses. The power gain achieved by conditionalizing can be substantial, especially for adaptive procedures that incorporate an estimate of the proportion of true null hypotheses.
 
For the case of the conditionalized Bonferroni procedure (CBP) we conjecture that the CBP is valid when the $p$-values are positively correlated. For this case we have given several sufficient conditions for FWER control by the CBP. We accompanied these results with an extensive simulation study and the results give supporting evidence for our conjecture. Nonetheless, a proof of our conjecture still eludes us and thus remains for future research. We have shown that it is not universally valid for negatively correlated variables, however. Other topics that in our opinion deserve further attention are the question of how to optimally choose the value for the cut-off parameter (i.e. $\lambda$) and whether the procedure is valid when the $p$-values are based on discretely distributed test statistics, since these typically do not fulfill the supra-uniformity condition of Theorem \ref{thm:independent}.

We believe that this paper makes a strong case for the usage of the conditionalized multiple testing procedures since they mitigate the loss of power typically associated with multiple testing procedures on inflated $p$-values and thus make it more attractive for researchers to formulate their scientific questions in terms of interval hypotheses. In light of the fact that shifting the focus towards interval hypotheses has been advocated as one of the solutions to get out of the current "$p$-value controversy" \citep{Wellek2017} this likely makes conditionalization a very powerful method of analysis.

\appendix
\section{Proofs}

The following notation is used. For $\mbf{p}=({p_1},\ldots,{p_m})$ and a subset $K \subseteq \{ 1,\ldots,m\} $, denote by $\mbf{p}_{\!K}$ be the subvector of components ${p_i}$ with $i\in{}K$. The index set of true hypotheses is denoted by $\mathcal{T}$, that is: $i \in \mathcal{T} \Leftrightarrow ({H_i}$ is true). For a scalar $\lambda $ we write $\mbf{p}_{\!K}\leq\lambda $ or $\mbf{p}_{\!K}>\lambda $ if these inequalities hold for component-wise. ${\mbf{U}}=({U_1},\ldots,{U_m})$ denotes a random vector with independent components that follow \SUD.

\subsection{Independence case}
\begin{proof}[Proof of Theorem 3.2]
For any set $K \subseteq \mathcal{T}$, let $\bar K = \mathcal{T} - K$, and define the orthant ${G_K} = [{{\mbf{p}}_K}\leq\lambda,{{\mbf{p}}_{\bar K}} > \lambda ]$. Conditionally on each ${G_K}$, the modified $p$-values ${p'_i} = {p_i}/\lambda$, $i\in\{1,\ldots,m\}$ are independent, and the distribution of each ${p'_i}$, $i\in\mathcal{T}$ stochastically dominates \SUD. Because $\P$ controls the decision rate under these circumstances, we may conclude that $E(D_{\Plam}|G_K) \leq \alpha$. Consequently, by the law of total expectation, $E(D_{\Plam})\leq\alpha$.
\end{proof}

\subsection{Bivariate normal case}

In this section we formulate three additional lemmaas which together immediately imply the validity of Proposition 1.

\begin{lemma}
\label{lem.bivariate_PQD}
Let $m=2$ and let $p_1,p_2$ be marginally standard uniformly distributed under the null hypothesis. Put $X_1=\Phi^{-1}(1-p_1)$ and $X_2=\Phi^{-1}(1-p_2)$ and let $X_1,X_2$ be positive quadrant dependent. Let $\alpha,\lambda\in(0,1)$ be such that $1-(1-\tfrac{\lambda\alpha}{2})^2+2(1-\lambda)\lambda\alpha\;\leq\;\alpha$. Then $\FWERcb\leq\alpha$.
\end{lemma}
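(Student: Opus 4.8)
The plan is to use that for $m=2$ the count $R:=\mathbf 1\{p_1\le\lambda\}+\mathbf 1\{p_2\le\lambda\}$ takes only the values $0,1,2$, so the event of at least one false rejection can be written out explicitly and split into pieces indexed by the value of $R$ (we take both hypotheses to be true, the only configuration for which the stated assumptions are imposed). First I would note that the CBP rejection threshold $\alpha\lambda/(R\vee 1)$ equals $\alpha\lambda$ when $R\le 1$ and $\alpha\lambda/2$ when $R=2$, and that both of these are strictly below $\lambda$ since $\alpha\in(0,1)$. On $\{R=0\}$ nothing is rejected; on $\{R=1\}$ only the hypothesis whose $p$-value is $\le\lambda$ is eligible, and it is rejected precisely when that $p$-value is $<\alpha\lambda$; on $\{R=2\}$ a hypothesis is rejected precisely when its $p$-value is $<\alpha\lambda/2$. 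This yields the \emph{disjoint} decomposition of the false-rejection event as $C\cup D_1\cup D_2$, with
\begin{align*}
C&=\{p_1\le\lambda,\ p_2\le\lambda,\ \min(p_1,p_2)<\tfrac{\alpha\lambda}{2}\},\\
D_1&=\{p_1<\alpha\lambda,\ p_2>\lambda\},\qquad D_2=\{p_2<\alpha\lambda,\ p_1>\lambda\},
\end{align*}
so that $\FWERcb=P(C)+P(D_1)+P(D_2)$.

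Second, I would pass to the test statistics via $\{p_i<t\}=\{X_i>\Phi^{-1}(1-t)\}$ and $\{p_i>t\}=\{X_i<\Phi^{-1}(1-t)\}$, which exhibits $D_1$ and $D_2$ as ``discordant'' rectangles, with one coordinate large and the other small. Positive quadrant dependence of $(X_1,X_2)$, $P(X_1>x,X_2>y)\ge P(X_1>x)P(X_2>y)$, subtracted from $P(X_1>x)$, gives $P(X_1>x,\,X_2<y)\le P(X_1>x)P(X_2<y)$ (boundary values are harmless since the margins are continuous). Applied to $D_1,D_2$ together with the standard uniform margins of $p_1,p_2$, this gives $P(D_1),P(D_2)\le\alpha\lambda(1-\lambda)$.

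Third, for $C$ I would use $\tfrac{\alpha\lambda}{2}<\lambda$ to write $C=\{p_1<\tfrac{\alpha\lambda}{2},\ p_2\le\lambda\}\cup\{p_2<\tfrac{\alpha\lambda}{2},\ p_1\le\lambda\}$ and apply inclusion--exclusion: each of the two sets has probability at most $\tfrac{\alpha\lambda}{2}$ by the uniform margins, whereas their intersection $\{p_1<\tfrac{\alpha\lambda}{2},\ p_2<\tfrac{\alpha\lambda}{2}\}$ is a ``concordant'' rectangle in $(X_1,X_2)$ and is bounded \emph{below} by $(\tfrac{\alpha\lambda}{2})^2$ by PQD applied directly. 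Hence $P(C)\le\alpha\lambda-\tfrac{(\alpha\lambda)^2}{4}=1-(1-\tfrac{\alpha\lambda}{2})^2$. Adding the three bounds gives $\FWERcb\le 1-(1-\tfrac{\lambda\alpha}{2})^2+2(1-\lambda)\lambda\alpha\le\alpha$, the last inequality being the hypothesis.

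I expect the only real obstacle to be the bookkeeping of the first two steps rather than any estimate: one must check that $C,D_1,D_2$ are genuinely pairwise disjoint and exhaust all false rejections (in particular that $\{R=0\}$ contributes nothing and that on $\{R=1\}$ the hypothesis carrying the large $p$-value can never be rejected), and one must be careful to invoke PQD in the correct direction for each piece --- as an \emph{upper} bound on the discordant rectangles $D_1,D_2$ and as a \emph{lower} bound on the concordant rectangle inside $C$. Once this is in place, the rest is the elementary identity $1-(1-\tfrac{\alpha\lambda}{2})^2=\alpha\lambda-\tfrac{(\alpha\lambda)^2}{4}$.
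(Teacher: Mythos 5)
Your proof is correct and follows essentially the same route as the paper's: both decompose the false-rejection event according to the value of $R$, then apply positive quadrant dependence as a lower bound on the concordant (on-diagonal) quadrant and as an upper bound on the discordant rectangles, arriving at the identical bound $1-(1-\tfrac{\lambda\alpha}{2})^2+2(1-\lambda)\lambda\alpha$. The only difference is cosmetic bookkeeping in how the event is partitioned (the paper writes the $R=1$ contribution as $P(B^p)-P(C^p)$ and then discards $P(C^p)$, whereas you absorb that slack into the bound on $D_1,D_2$ directly).
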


\begin{proof}
For fixed $\alpha\in(0,1)$ and $\lambda\in(0,1)$ the CBP rejects $H_i$ whenever $p_i\leq\lambda\alpha/R_\lambda$, where $R_\lambda=I\{p_1\leq\lambda\}+I\{p_2\leq\lambda\}$. The corresponding FWER can be written as $\FWERcb=1-P(A^p)+2[P(B^p)-P(C^p)]$, where $A^p=\{p_i\in(\tfrac{\lambda\alpha}{2},1),i=1,2\}$, $B^p=\{p_1\in(\lambda,1),p_2\in(0,\lambda\alpha)\}$ and $C^p=\{p_1\in(\lambda,1),p_2\in(0,\tfrac{\lambda\alpha}{2})\}$. Since $P(C^p)\geq0$, it holds $\FWERcb\leq1-P(A^p)+2P(B^p)$. Since $A^p$ is an "on-diagonal" quadrant, with positive quadrant dependence the probability $P(A^p)$ is \emph{minimized} under independence, when its probability is $P^\bot(A^p)=(1-\tfrac{1}{2}\lambda\alpha)^2$. Analogously, $B^p$ is an "anti-diagonal" quadrant, which means that $P(B^p)$ is \emph{maximized} under independence, thus $P(B^p)\leq{}P^\bot(B^p)=(1-\lambda)\lambda\alpha$. Consequently, $\FWERcb\leq1-(1-\tfrac{\lambda\alpha}{2})^2+2(1-\lambda)\lambda\alpha$.
\end{proof}

Solving this inequality with respect to $\alpha$ and $\lambda$ yields a set of combinations of $\alpha$ and $\lambda$ for which the CBP controls FWER under positive dependence. The permissible ranges are depicted in Figure \ref{fig.lambda_ok_rough_QPD}. Note that if $\lambda\leq\tfrac{1}{2}$ it holds trivially $\FWERcb\leq\alpha$ since in such case $\FWERcb$ is dominated by the FWER of the classical Bonferroni method. The lemma requires that the test statistics are positive quadrant dependent, which under the bivariate normal model with correlation $\rho$ is equivalent to $\rho\geq0$.

\begin{lemma}
\label{lem.bivariate_normal}
Let $n=2$ and let $p_1,p_2$ be marginally standard uniformly distributed under the null hypothesis. Put $X_1=\Phi^{-1}(1-p_1)$ and $X_2=\Phi^{-1}(1-p_2)$ and let $(X_1,X_2)'\sim{}N(0,\Sigma_\rho)$ under the null hypothesis, where $\Sigma_\rho=\rho+\diag(1-\rho,1-\rho)$.
Let $\alpha,\lambda\in(0,1)$ be such that $\alpha\lambda\leq\frac{2}{3}$. If $\rho\geq0$, then $\FWERcb\leq\alpha$.
\end{lemma}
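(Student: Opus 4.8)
The plan is, for $m=2$, to reduce $\FWERcb$ to an expression governed by a single conditional Gaussian probability, to bound that probability by maximizing over the correlation $\rho$, and to finish with an elementary inequality in which the hypothesis $\alpha\lambda\le 2/3$ is exactly what is needed. When $\lambda\le 1/2$ the statement is immediate, since the CBP is then dominated by the ordinary Bonferroni procedure (as noted in the discussion of Lemma~\ref{lem.bivariate_PQD}), so I would assume $\lambda>1/2$ and write $q=\lambda\alpha$, so that $q/2<q<\lambda<1$. Using only the standard uniformity of the marginals and the fact that the model is exchangeable in $(p_1,p_2)$, a case split on $R_2(\lambda)\in\{0,1,2\}$ gives $P(\text{reject }H_i)=P(p_i<q,\ p_{3-i}>\lambda)+P(p_i<q/2,\ p_{3-i}\le\lambda)$ and $P(\text{reject }H_1\text{ and }H_2)=P(p_1<q/2,\ p_2<q/2)=:S$, whence, after simplification,
\[
\FWERcb \;=\; q + 2I_2 - S,\qquad I_2:=P\big(q/2\le p_1<q,\ p_2>\lambda\big)
\]
(this is also the identity $\FWERcb=1-P(A^p)+2(P(B^p)-P(C^p))$ from the proof of Lemma~\ref{lem.bivariate_PQD}, rearranged).

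Next I would bound $I_2$. Since $p_1=x$ corresponds to $X_1=\Phi^{-1}(1-x)$, decreasing in $x$, and $X_2\mid X_1=t\sim N(\rho t,1-\rho^2)$, the map $x\mapsto P(p_2>\lambda\mid p_1=x)=\Phi\big((\Phi^{-1}(1-\lambda)-\rho\,\Phi^{-1}(1-x))/\sqrt{1-\rho^2}\big)$ is nondecreasing when $\rho\ge 0$. Hence, with $a=\Phi^{-1}(q)$ and $b=\Phi^{-1}(\lambda)>0$,
\[
I_2=\int_{q/2}^{q}P(p_2>\lambda\mid p_1=x)\,dx\;\le\;\tfrac{q}{2}\,P(p_2>\lambda\mid p_1=q)=\tfrac{q}{2}\,\Phi\!\Big(\tfrac{\rho a-b}{\sqrt{1-\rho^2}}\Big).
\]
A one-line differentiation shows $(\rho a-b)/\sqrt{1-\rho^2}$ is nonincreasing in $\rho\in[0,1)$ when $a\le 0$ (maximum $-b$, at $\rho=0$) and otherwise attains its maximum $-\sqrt{b^2-a^2}$ at $\rho=a/b$; therefore $P(p_2>\lambda\mid p_1=q)\le M$, where $M=1-\lambda$ if $q\le 1/2$ and $M=\Phi(-\sqrt{b^2-a^2})$ if $q>1/2$. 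Moreover $S\ge P(p_1<q/2)\,P(p_2<q/2)=(q/2)^2$ by positive quadrant dependence of $(p_1,p_2)$ (which holds because $\rho\ge 0$). Combining, $\FWERcb\le q+qM-q^2/4$, so $\FWERcb\le\alpha$ will follow once we check $\lambda(1+M)-\lambda^2\alpha/4\le 1$.

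For that last inequality: if $q\le 1/2$ then $M=1-\lambda$ and the left-hand side is at most $\lambda(2-\lambda)=1-(1-\lambda)^2\le 1$. If $q>1/2$ then $M=\Phi(-\sqrt{b^2-a^2})$ with $0<a\le\Phi^{-1}(2/3)$ --- this is where the hypothesis $\alpha\lambda\le 2/3$ enters, since $a=\Phi^{-1}(\lambda\alpha)$ --- and, using $\lambda=\Phi(b)$ and $\lambda^2\alpha/4=\Phi(b)\Phi(a)/4$, the inequality is equivalent to
\[
\Phi\big(-\sqrt{b^2-a^2}\big)\;\le\;\frac{1-\Phi(b)}{\Phi(b)}+\frac{\Phi(a)}{4}.
\]
I would establish this by the chain $\Phi(-\sqrt{b^2-a^2})\le\Phi(a-b)=1-\Phi(b-a)=(1-\Phi(b))+\int_{b-a}^{b}\varphi(s)\,ds$, using $\sqrt{b^2-a^2}\ge b-a>0$; then $\int_{b-a}^{b}\varphi(s)\,ds\le\int_{0}^{a}\varphi(s)\,ds=\Phi(a)-\tfrac12$ because $\varphi$ is decreasing on $[0,\infty)$; then $\Phi(a)-\tfrac12\le\tfrac14\Phi(a)$, which is precisely $\Phi(a)\le\tfrac23$; and finally $1-\Phi(b)\le(1-\Phi(b))/\Phi(b)$. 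This completes the argument.

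The only step requiring real work is the second paragraph: the bound $\FWERcb\le q+2q(1-\lambda)-q^2/4$ coming directly from Lemma~\ref{lem.bivariate_PQD} is too weak here, and the improvement relies on controlling $I_2$ by the conditional probability at the worst point $p_1=q$ rather than by the marginal $1-\lambda$, together with the explicit maximization of that conditional probability over $\rho$. Once that is in place the remaining inequality is short, and it is there that the constant $2/3$ appears, through $\tfrac14\cdot\tfrac23=\tfrac23-\tfrac12$.
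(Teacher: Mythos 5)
Your argument is correct, and it is a genuinely different proof from the one in the paper. The paper keeps the quadrant decomposition $\FWERcb=1-P_{\!\rho}(A^p)+2[P_{\!\rho}(B^p)-P_{\!\rho}(C^p)]$ and shows, via Plackett's identity ($\partial F_\rho/\partial\rho$ equals the bivariate density) and an auxiliary function $h(\rho)\geq 2$, that $\FWERcb$ is nonincreasing in $\rho$ on $[0,1)$ whenever $\lambda\alpha\leq 2/3$; the bound then follows by reduction to the independent case $\rho=0$. You instead fix $\rho\geq 0$ and bound the rearranged expression $q+2I_2-S$ directly: the monotonicity of $x\mapsto P(p_2>\lambda\mid p_1=x)$, the closed-form maximization of $(\rho a-b)/\sqrt{1-\rho^2}$ over $\rho$ (giving $-\sqrt{b^2-a^2}$), the PQD lower bound $S\geq(q/2)^2$, and an elementary chain of inequalities in which $\lambda\alpha\leq 2/3$ enters as $\Phi(a)-\tfrac12\leq\tfrac14\Phi(a)$. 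I checked the key steps --- the identity $\FWERcb=q+2I_2-S$, the sign of $g'(\rho)=(a-b\rho)(1-\rho^2)^{-3/2}$, and the final chain through $\Phi(-\sqrt{b^2-a^2})\leq 1-\Phi(b-a)$ --- and they all hold. What the paper's route buys is a structural fact (monotonicity of the FWER in $\rho$) that it reuses in the proof of Lemma \ref{lem.bivariate_normal_gap}; what your route buys is an explicit, elementary, per-$\rho$ numerical bound $\FWERcb\leq q+qM-q^2/4$ that avoids differentiating the Gaussian measure in $\rho$ altogether. It is a curiosity that the same constant $2/3$ emerges in both proofs from formally different computations (in the paper from $|z_1|\leq|z_2|$, in yours from $\tfrac34\Phi(a)\leq\tfrac12$). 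One presentational remark: when you invoke $b=\Phi^{-1}(\lambda)>0$ you are implicitly in the regime $\lambda>1/2$, which you did dispose of at the outset, but it is worth flagging that the subcase ``$q\leq 1/2$, $M=1-\lambda$'' also relies on $b\geq 0$.
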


\begin{proof}
Analogously to the proof of Lemma \ref{lem.bivariate_PQD}, we can write $\FWERcb$ as
\begin{align}
\label{eqn.Fs}
\FWERcb&\;=\;1-P_{\!\!\rho}(A^p)+2[P_{\!\!\rho}(B^p)-P_{\!\!\rho}(C^p)],
\end{align}
where $A^p=\{p_i\in(\tfrac{\lambda\alpha}{2},1),i=1,2\}$, $B^p=\{p_1\in(\lambda,1),p_2\in(0,\lambda\alpha)\}$, $C^p=\{p_1\in(\lambda,1),p_2\in(0,\tfrac{\lambda\alpha}{2})\}$, where we added the lower index $\rho$ into the notation $P_{\!\!\rho}$ to signify that the probability is a function of $\rho$. We proceed to show that $\FWERcb$ is a decreasing function in $\rho\in[0,1)$ whenever $\alpha\lambda\leq\frac{2}{3}$, which we do by differentiating $P_{\!\!\rho}(A^p)$, $P_{\!\!\rho}(B^p)$ and $P_{\!\!\rho}(C^p)$ with respect to $\rho$.

By \citet{Tong90} (page 191) the derivative of the bivariate normal distribution function $F_\rho(x)$ with respect to $\rho$ equals its density at $x$. Consequently, $\tfrac{\partial}{\partial\rho}\,P_{\!\!\rho}(A^p)=f_\rho(z_2,z_2)$, $\tfrac{\partial}{\partial\rho}\,P_{\!\!\rho}(B^p)=-f_\rho(z_0,z_1)$, $\tfrac{\partial}{\partial\rho}\,P_{\!\!\rho}(C^p)=-f_\rho(z_0,z_2)$, where $f_\rho$ is the density function of the unit-variance bivariate normal distribution
\begin{align*}
f_\rho(x,y)&\;=\;\tfrac{1}{2\pi}(1-\rho^2)^{-1/2}\,\exp(-\tfrac{(x-\mu_1)^2-2\rho{}(x-\mu_1)(y-\mu_2)+(y-\mu_2)^2}{2(1-\rho^2)}),
\end{align*}
and $z_0=\Phi^{-1}(1-\lambda)$, $z_1=\Phi^{-1}(1-\lambda\alpha)$, $z_2=\Phi^{-1}(1-\tfrac{\lambda\alpha}{2})$. Therefore, $\tfrac{\partial}{\partial\rho}\,\FWERcb=-f_\rho(z_2,z_2)-2[f_\rho(z_0,z_1)-f_\rho(z_0,z_2)]$, which in turn means that $\tfrac{\partial}{\partial\rho}\,\FWERcb\leq0$ whenever
\begin{align}
\label{eqn.Fs_via_devatives_condition}
\frac{f_\rho(z_2,z_2)}{f_\rho(z_0,z_2)}+2\frac{f_\rho(z_0,z_1)}{f_\rho(z_0,z_2)}\;\geq\;2.
\end{align}
The conditional distribution of $X_2\,|\,X_1=z_0$ is $N(\mu_1+\rho(z_0-\mu_2), 1-\rho^2)$, which has density
$g_\rho(x;z_0)=(2\pi(1-\rho^2))^{-1/2}\exp(-\tfrac{1}{2}\,(x-\rho{}z_0-(\mu_1-\rho\mu_2))^2(1-\rho^2)^{-1})$. Consequently,
\begin{align*}
\frac{f_\rho(z_0,z_1)}{f_\rho(z_0,z_2)}&\;=\;\frac{g_\rho(z_1;z_0)}{g_\rho(z_2;z_0)}\;=\;\exp(-(z_1-\rho{}z_0-\mu_1+\rho\mu_2)^2+(z_2-\rho{}z_0-\mu_1+\rho\mu_2)^2).
\end{align*}
Similarly, the conditional distribution of $X_1\,|\,X_2=z_2$ is $N(\mu_2+\rho(z_2-\mu_1), 1-\rho^2)$, and therefore
\begin{align*}
\frac{f_\rho(z_2,z_2)}{f_\rho(z_0,z_2)}&\;=\;\frac{g_\rho(z_2;z_2)}{g_\rho(z_0;z_2)}\;=\;\exp(-(z_2-\rho{}z_2-\mu_2+\rho\mu_1)^2+(z_0-\rho{}z_2-\mu_2+\rho\mu_1)^2).
\end{align*}
Under the null hypothesis we have $\mu_1=\mu_2=0$. Define
\begin{align}
\label{eqn.h_rho}
h(\rho)&\;=\;\exp(-(z_2-\rho{}z_2)^2+(z_0-\rho{}z_2)^2)+2\exp(-(z_1-\rho{}z_0)^2+(z_2-\rho{}z_0)^2).
\end{align}
Then (\ref{eqn.Fs_via_devatives_condition}) is equivalent to $h(\rho)\geq2$. Differentiating $h(\rho)$ with respect to $\rho$ yields
\begin{align*}
h'(\rho)
&\;=\;2z_2(z_2-z_0)\exp(-(z_2-\rho{}z_2)^2+(z_0-\rho{}z_2)^2)-4z_0(z_2-z_1)\exp(-(z_1-\rho{}z_0)^2+(z_2-\rho{}z_0)^2).
\end{align*}
Since $z_2\geq0$, $z_2\geq{}z_1$ and $z_0\leq0$, it holds $h'(\rho)\geq0$. In other words, $h(\rho)$ is minimized at $\rho=0$, where $h(0)=\exp(z_0^2-z_2^2)+2\exp(z_2^2-z_1^2)$.
Clearly, for any $|z_1|\leq|z_2|$ it holds $h(\rho)\geq2$ and the inequality (\ref{eqn.Fs_via_devatives_condition}) is satisfied. Since $\Phi^{-1}$ is strictly monotone and symmetric about $\tfrac{1}{2}$, finding the largest $\alpha$ for a given $\lambda$ such that $|z_1|\leq|z_2|$ leads to the inequality $\tfrac{1}{2}-(1-\lambda\alpha)\leq1-\tfrac{\lambda\alpha}{2}-\tfrac{1}{2}$, which is equivalent to $\lambda\alpha\leq\tfrac{2}{3}$.
\end{proof}

As it turns out, Lemmas \ref{lem.bivariate_PQD} and \ref{lem.bivariate_normal} together cover almost all combinations of $\alpha,\lambda\in(0,1)$. In Figure \ref{lem.bivariate_normal} the right plot shows the area (white) which is not covered by the two lemmas. Next we close the "gap" left uncovered by the two lemmas.

\begin{lemma}
\label{lem.bivariate_normal_gap}
Let $n=2$ and let $p_1,p_2$ be marginally standard uniformly distributed under the null hypothesis. Put $X_1=\Phi^{-1}(1-p_1)$ and $X_2=\Phi^{-1}(1-p_2)$ and let $(X_1,X_2)'\sim{}N(0,\Sigma_\rho)$ under the null hypothesis, where $\Sigma_\rho=\rho+\diag(1-\rho,1-\rho)$.
Let $\alpha,\lambda\in(0,1)$ be such that $\alpha\lambda\geq\frac{2}{3}$ and $1-(1-\tfrac{\lambda\alpha}{2})^2+2(1-\lambda)\lambda\alpha\;\geq\;\alpha$. If $\rho\geq0$, then $\FWERcb\leq\alpha$.
\end{lemma}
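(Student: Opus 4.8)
The plan is to prove that $\rho\mapsto\FWERcb(\rho)$ is decreasing on $[0,1)$; then $\FWERcb(\rho)\le\FWERcb(0)$, and $\FWERcb(0)\le\alpha$ because at $\rho=0$ the two $p$-values are independent and standard uniform, hence supra-uniform, so Theorem~\ref{thm:independent} applies to the ordinary Bonferroni procedure.

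For the monotonicity I would reuse the computation from the proof of Lemma~\ref{lem.bivariate_normal}: writing $z_0=\Phi^{-1}(1-\lambda)$, $z_1=\Phi^{-1}(1-\lambda\alpha)$, $z_2=\Phi^{-1}(1-\tfrac{\lambda\alpha}{2})$, and letting $h$ be the function in~(\ref{eqn.h_rho}), one has that $\tfrac{\partial}{\partial\rho}\FWERcb(\rho)\le0$ whenever $h(\rho)\ge2$, and $h$ is increasing (this only uses $z_2\ge0\ge z_0$ and $z_2\ge z_1$, which hold because $\lambda>\tfrac12$). Hence it suffices to verify $h(0)=\exp(z_0^2-z_2^2)+2\exp(z_2^2-z_1^2)\ge2$. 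In Lemma~\ref{lem.bivariate_normal} this was obtained from the sufficient condition $|z_1|\le|z_2|$, i.e.\ $\lambda\alpha\le\tfrac23$; that condition is precisely what is violated in the present lemma, so the crux is to establish $h(0)\ge2$ by other means.

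The key point is that the two hypotheses together pin $\lambda$ into $[\tfrac23,\tfrac34]$. Dividing $1-(1-\tfrac{\lambda\alpha}{2})^2+2(1-\lambda)\lambda\alpha\ge\alpha$ by $\lambda\alpha$ gives $3-2\lambda-\tfrac{\lambda\alpha}{4}\ge\tfrac1\lambda$; bounding $\tfrac{\lambda\alpha}{4}$ below by $\tfrac16$ and clearing denominators turns this into $12\lambda^2-17\lambda+6\le0$, whose roots are $\tfrac23$ and $\tfrac34$. From $\lambda\le\tfrac34$ and $\alpha<1$ one gets $\lambda\alpha<\tfrac34$, hence $z_1^2<(\Phi^{-1}(\tfrac34))^2<\ln2$ (since $\Phi^{-1}(\tfrac34)\approx0.674<\sqrt{\ln2}\approx0.833$). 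From $\lambda\ge\tfrac23$ and $\lambda\alpha\ge\tfrac23$ one gets $2\lambda+\lambda\alpha\ge2$, i.e.\ $\lambda\ge1-\tfrac{\lambda\alpha}{2}$, i.e.\ $-z_0=\Phi^{-1}(\lambda)\ge\Phi^{-1}(1-\tfrac{\lambda\alpha}{2})=z_2\ge0$, so $z_0^2\ge z_2^2$. Combining, $h(0)\ge1+2\exp(-z_1^2)>1+2\cdot\tfrac12=2$, using $z_2^2\ge0$ for the first inequality and $z_1^2<\ln2$ for the second. Then $h(\rho)\ge h(0)>2$ for all $\rho\in[0,1)$, which gives the desired monotonicity and finishes the proof.

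I do not anticipate a genuine obstacle. The only step that is not bookkeeping is noticing the reduction $\lambda\in[\tfrac23,\tfrac34]$; after that, $h(0)\ge2$ follows from the elementary bounds above, and everything else (the derivative formula, the monotonicity of $h$, the $\rho=0$ case) is already available from Theorem~\ref{thm:independent} and the proof of Lemma~\ref{lem.bivariate_normal}.
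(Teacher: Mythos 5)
Your proof is correct, and at the top level it follows the same skeleton as the paper's: reuse the derivative computation from the proof of Lemma~\ref{lem.bivariate_normal} to reduce the claim to $h(0)\geq 2$ for the function $h$ in~(\ref{eqn.h_rho}), use that $h$ is increasing in $\rho$, and settle $\rho=0$ by independence (Theorem~\ref{thm:independent}). Where you genuinely diverge is in the crux, the verification of $h(0)\geq2$, which is the only new content of this lemma. The paper first invokes Lemma~\ref{lem.bivariate_PQD} to discard the part of the hypothesis region with $\lambda\alpha\geq0.69$ and then checks $h(0)\geq2$ on the remaining compact set $\Omega$ via numerical interval bounds on $z_1$ and $z_2$ (e.g.\ $0.398<z_2<0.431$, $z_2^2-z_1^2>-0.087$). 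You instead extract from the two hypotheses the exact consequence $\lambda\in[\tfrac23,\tfrac34]$ (dividing the second hypothesis by $\lambda\alpha$ gives $3-2\lambda-\tfrac{\lambda\alpha}{4}\geq\tfrac1\lambda$, and with $\tfrac{\lambda\alpha}{4}\geq\tfrac16$ this yields $12\lambda^2-17\lambda+6\leq0$, whose roots are indeed $\tfrac23$ and $\tfrac34$), from which $2\lambda+\lambda\alpha\geq2$ gives $-z_0\geq z_2\geq0$ and hence $z_0^2\geq z_2^2$, while $\lambda\alpha<\tfrac34$ gives $z_1^2<(\Phi^{-1}(\tfrac34))^2<\ln2$; together these yield $h(0)\geq1+2e^{-z_1^2}>2$ on the entire hypothesis region, with no appeal to Lemma~\ref{lem.bivariate_PQD} and no numerics. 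I verified each of these steps and they are sound. Your version buys a self-contained, fully analytic proof; this is a real improvement, since the paper's final numerical step is written somewhat carelessly (a spurious minus sign in ``$-2\exp(z_2^2-z_1^2)\geq1.83$'' and a degenerate chain of three identical expressions bounding $\exp(z_0^2-z_2^2)$), whereas your bound $\exp(z_0^2-z_2^2)\geq1$ is exact and transparent.
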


\begin{proof}
It can be easily verified that $(\lambda,\alpha)=(\tfrac{2}{3},1)$ and $(\lambda,\alpha)=(\tfrac{3}{4},\tfrac{8}{9})$ are the two points for which the two inequalities in the lemma simultaneously turn into equalities. Moreover, for any $\lambda,\alpha\in(0,1)$ such that $\lambda\alpha\geq0.69$ it holds $1-(1-\tfrac{\lambda\alpha}{2})^2+2(1-\lambda)\lambda\alpha\;\leq\;\alpha$. Consequently, we only need to show that $\FWERcb\leq\alpha$ for $\lambda,\alpha\in\Omega$, where $\Omega=\{(\alpha,\lambda)\in\R^2:\alpha\in[\tfrac{8}{9},1),\lambda\in[\tfrac{2}{3},\tfrac{3}{4}],\tfrac{2}{3}\leq\lambda\alpha\leq0.69\}$. As discussed in the proof of Lemma \ref{lem.bivariate_normal}, it is sufficient to show that $h(\rho)\geq2$ with $h(\rho)$ defined in (\ref{eqn.h_rho}). It can be easily shown that on $\Omega$ we have $0.398<z_2<0.431$ and $-0.496<z_1<-0.43$ and both $z_1$ and $z_2$ are decreasing in $\lambda\alpha$ and so is $z_2^2-z_1^2$ with the minimum above $-0.087$. Consequently, $-2\exp(z_2^2-z_1^2)\geq1.83$ and since it also holds also $\exp(z_0^2-z_2^2)\geq\exp(z_0^2-z_2^2)\geq\exp(z_0^2-z_2^2)>0.83$, we get $h(0)\geq2$. Since it was already shown in the proof of Lemma \ref{lem.bivariate_normal} that $h(\rho)$ is decreasing in $\rho$, this concludes the proof.
\end{proof}

\begin{proof}[Proof of Proposition 1]
The proposition is an immediate corollary of Lemmas \ref{lem.bivariate_PQD}--\ref{lem.bivariate_normal_gap}.
\end{proof}

\subsection{Expectation criterion}
\begin{proof}[Proof of Lemma 4.1: expectation criterion]
It is sufficient to consider the cases where all tested hypotheses are true, since adding false hypotheses to the test cannot increase the FWER. Divide the interval $(0, \lambda \alpha]$ into intervals $B_k = (b_{k+1}, b_k]$ with $b_k = \lambda \alpha / k$ for $k=1,\ldots,m$, and $b_{m+1}=0$. For each $H_i$, denote with $E_i$ the probability to reject $H_i$. It is given by
\begin{align*}
E_i&\;=\;P(p_i \le \lambda \alpha / (R \lor 1))
\;=\;P(R \le \tfrac{\lambda \alpha}{p_i}, p_i \le \lambda \alpha))
\;=\;\sum_{k=1}^{m}{P(R \le \tfrac{\lambda \alpha}{p_i}\,|\,p_i \in B_k)P(p_i \in B_k)}.
\end{align*}
Since $P(R\le{}k\,|\,p_i=x)$ is assumed to be increasing in $x$, we get
\begin{align*}
E_i&\;\leq\;\sum_{k=1}^{m}{P(R \le k\,|\,p_i =\lambda \alpha)P(p_i \in B_k)}\\
&\;\leq\;\sum_{k=1}^{m}{P(R \le k\,|\,p_i =\lambda \alpha)(b_k-b_{k+1})\frac{F_i(\lambda \alpha)}{\lambda \alpha}}\\
&\;=\;\sum_{k=1}^{m}{P(R = k\,|\,p_i =\lambda \alpha)\frac{1}{k} F_i(\lambda \alpha)}\\
&\;=\;E(R^{-1}\,|\,p_i =\lambda \alpha)P(p_i \le \lambda \alpha).
\end{align*}
Therefore, $\FWERcb\le\sum_{i=1}^{m}{E(R^{-1}\,|\,p_i=\lambda\alpha)P(p_i\le\lambda\alpha)}\leq\alpha$ by the assumptions of the lemma.
\end{proof}

\subsection{Equicorrelated normal case}

For the proof of Lemma 4.2 we need Lemma \ref{lem:binomial}.
\begin{lemma}
\label{lem:binomial}
If $X$ is a random variable with binomial $(n, p)$ distribution, then
\begin{align*}
E(X+1)^{-1}\;=\;\frac{(1-(1-p)^{n+1})}{(n+1)p} \le \frac{1}{(n+1)p}.
\end{align*}
\end{lemma}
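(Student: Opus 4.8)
The plan is to compute $E(X+1)^{-1}$ directly from the definition and recognize the resulting sum as a binomial expansion. First I would write
\[
E(X+1)^{-1}=\sum_{k=0}^{n}\frac{1}{k+1}\binom{n}{k}p^k(1-p)^{n-k},
\]
and then use the elementary identity $\frac{1}{k+1}\binom{n}{k}=\frac{1}{n+1}\binom{n+1}{k+1}$, which follows immediately from writing out the factorials. Substituting this in pulls the factor $\frac{1}{(n+1)}$ out of the sum and shifts the index, giving
\[
E(X+1)^{-1}=\frac{1}{n+1}\sum_{k=0}^{n}\binom{n+1}{k+1}p^k(1-p)^{n-k}=\frac{1}{(n+1)p}\sum_{j=1}^{n+1}\binom{n+1}{j}p^{j}(1-p)^{(n+1)-j},
\]
where I reindexed with $j=k+1$ and multiplied and divided by $p$.

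Next I would complete the binomial sum: the sum over $j$ from $1$ to $n+1$ of $\binom{n+1}{j}p^j(1-p)^{n+1-j}$ equals $\sum_{j=0}^{n+1}\binom{n+1}{j}p^j(1-p)^{n+1-j}$ minus the $j=0$ term, i.e.\ $1-(1-p)^{n+1}$ by the binomial theorem. This yields exactly
\[
E(X+1)^{-1}=\frac{1-(1-p)^{n+1}}{(n+1)p},
\]
which is the claimed equality. The inequality $\frac{1-(1-p)^{n+1}}{(n+1)p}\le\frac{1}{(n+1)p}$ is then immediate since $0\le(1-p)^{n+1}\le 1$, so $1-(1-p)^{n+1}\le 1$ (and one should note $p>0$ so the division is legitimate; the degenerate case $p=0$ can be read off by continuity or handled separately since then $X\equiv 0$ and $E(X+1)^{-1}=1$, consistent with the limit).

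There is essentially no obstacle here — the only thing to be careful about is the bookkeeping in the index shift and remembering to account for the missing $j=0$ term when closing up the binomial sum. An alternative route, if one prefers to avoid the factorial identity, is to use the integral representation $(k+1)^{-1}=\int_0^1 t^k\,dt$, interchange sum and integral, evaluate $\sum_k\binom{n}{k}(pt)^k(1-p)^{n-k}=(1-p+pt)^n$, and then compute $\int_0^1(1-p+pt)^n\,dt$ by a trivial substitution; this gives the same closed form. Either way the computation is short and self-contained.
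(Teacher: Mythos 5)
Your proposal is correct and follows essentially the same route as the paper's proof: both use the identity $\frac{1}{k+1}\binom{n}{k}=\frac{1}{n+1}\binom{n+1}{k+1}$ to rewrite the expectation as an incomplete binomial$(n+1,p)$ sum and then bound (or evaluate) it. Your version is slightly more explicit about the missing $j=0$ term and the $p=0$ edge case, but the argument is the same.
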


\begin{proof}
Using $\frac{1}{k+1}\binom{n}{k} = \frac{1}{n+1}\binom{n+1}{k+1}$
we obtain
\begin{align*}
E(X+1)^{-1}&\;=\;\frac {1}{n+1} \sum_{k=0}^{n}\tbinom{n+1}{k+1}p^k (1-p)^{n-k}
\;=\;\frac{1}{(n+1)p} \sum_{k=1}^{n+1}\tbinom{n+1}{k}p^k (1-p)^{n+1-k},
\end{align*}
where the last sum corresponds to the binomial distribution with parameters $n+1$ and $p$, and is thus upper-bounded by 1.
\end{proof}

\begin{proof}[Proof of Lemma 4.2: equicorrelated normal case]
Note that the $p$-values are standard uniform, thus their distribution functions $F_i$ satisfy the condition $(F_i(y) - F_i(x))\lambda \alpha \leq F_i(\lambda \alpha)(y-x)$ of Lemma 4.1. Moreover, $P(R_m \leq k\,|\,p_i)$ is increasing in $p_i$ by Theorem 4.1 of \citet{KR1980}, since the $p_i$ are multivariate totally positive of order 2 and the function 
$\phi(p_1, \ldots, p_m):=\boldsymbol{1}[{(\sum_{i=1}^m\boldsymbol{1}\{p_i\le\lambda\}) \le k]}$ is increasing. It remains to prove that the the expectation criterion is satisfied. We may write  ${{Z}_{i}}=\sqrt{\rho }\,\Theta +\sqrt{1-\rho }\,{{\varepsilon }_{i}}$, where $\Theta,{{\varepsilon }_{1}},\ldots,{{\varepsilon }_{n}}$ are independent standard normal variables. Define $z=\Phi^{-1}(\lambda\alpha)$ and $\Theta_z=(\Theta - \sqrt{\rho} z)/\sqrt{1-\rho}$. Immediately, $\Theta_z$ has a standard normal distribution conditionally on the event $[Z_i=z]$, and hence on $[p_i=\lambda \alpha]$. Define $X_i = \boldsymbol{1}_{\{p_i \leq \lambda\}}$. Conditionally on $\Theta_z$, the $X_i$ are independent Bernoulli variables with success probability $P(X_i=1\,|\,\Theta_z)=\Phi(\mu-\sqrt{\rho}\Theta_z)$. Therefore the conditional distribution $R\,|\,[\Theta_z,p_i=\lambda \alpha]$ is equal to the conditional distribution $R\,|\,[\Theta_z,X_i=1]$, and Lemma \ref{lem:binomial} yields $E(R^{-1}\,|\,\Theta_z,p_i=\lambda\alpha)\le(m\Phi(\mu-\sqrt{\rho}\Theta_z))^{-1}$. Taking the expectation over $\Theta_z$ and using (2) yields $E(R^{-1}\,|\,p_i=\lambda\alpha)\le(m\lambda)^{-1}$, which in turn implies that the expectation criterion is satisfied. The conclusion then follows by Lemma 4.1.
\end{proof}

\subsection{Mixtures}
\begin{proof}[Proof of Proposition 2: mixtures]
By the law of total expectation it follows that $E(D_{\P}) = E(E(D_{\P}|w)) \leq \alpha$.
\end{proof}

\subsection{Asymptotic control}
\begin{proof}[Proof of Proposition 3: asymptotic case]
For a given $\varepsilon '>0$ we prove that, for $m$ sufficiently large, $\FWER\le\alpha+\varepsilon '$. First, note that there is an $\varepsilon >0$ such that
$\frac{\eta +\varepsilon}{\eta -\varepsilon }\le 1+\tfrac{1}{2}\varepsilon'$. Moreover, for $m$ large enough we have, as a consequence of the convergence in probability, it holds $P(|R/m-\eta|\ge\varepsilon)\le\tfrac{1}{2}\varepsilon'$ and $E(R/m)\le\eta+\varepsilon$. Then
\begin{align*}
\FWER&\;=\;P\Big(\textstyle\bigcup\limits_{i=1}^{m}{p_i<\frac{\alpha \lambda }{R}}\Big) \\
&\;=\;P\Big(\textstyle\bigcup\limits_{i=1}^{m}{p_i<\frac{\alpha\lambda}{R}},\frac{R}{m}\ge\eta-\varepsilon\Big)
+P\Big(\textstyle\bigcup\limits_{i=1}^{m}{p_i<\frac{\alpha\lambda}{R}},\frac{R}{m}<\eta-\varepsilon\Big)\\
&\;\le\;P\Big(\textstyle\bigcup\limits_{i=1}^{m}{p_i<\frac{\alpha\lambda}{(\eta-\varepsilon )m}}\Big)+P\big(\frac{R}{m}<\eta -\varepsilon\big) \\
&\;\le\; \sum\limits_{i=1}^{m}{P(p_i<\tfrac{\alpha \lambda }{(\eta -\varepsilon )m})}+\frac{\varepsilon '}{2} \\
&\;=\;\sum\limits_{i=1}^{m}{P(p_i<\tfrac{\alpha \lambda }{(\eta -\varepsilon )m}\,|\,p_i\le \lambda )}\,P(p_i\le \lambda )+\frac{\varepsilon '}{2} \\
&\;\le\; \sum\limits_{i=1}^{m}{\frac{\alpha }{(\eta -\varepsilon )m}}\,P(p_i\le \lambda )+\frac{\varepsilon '}{2} \\
&\;=\; \frac{\alpha }{\eta -\varepsilon }E(R/m)+\frac{\varepsilon '}{2} \\
&\;\le\; \alpha\frac{\eta +\varepsilon }{\eta -\varepsilon }+\frac{\varepsilon '}{2} \\
&\;\le\; \alpha+\varepsilon'.
\end{align*}
\end{proof}

\begin{proof}[Proof of Corollary 5.1]
From
\begin{align*}
\var(R)&\;=\;\sum\limits_{i=1}^{m}{\var(\mathbf{1}[p_i\le\lambda ])
+\sum\limits_{i=1}^{m}{\sum\limits_{j=1,j\ne i}^{m}{\text{cov}(\mathbf{1}[p_i\le\lambda],\mathbf{1}[{{p}_{j}}\le \lambda ])}}}
\end{align*}
it follows $\var(R/m)\le \frac{1}{4}(\frac{1}{m}+\bar\rho_m)$, where the right-hand side goes to 0 as $m\to\infty$. The rest follows by Proposition 3.
\end{proof}

\bibliography{julesreferences}

\begin{figure}[!p]
\centering
\includegraphics[width=0.48\linewidth]{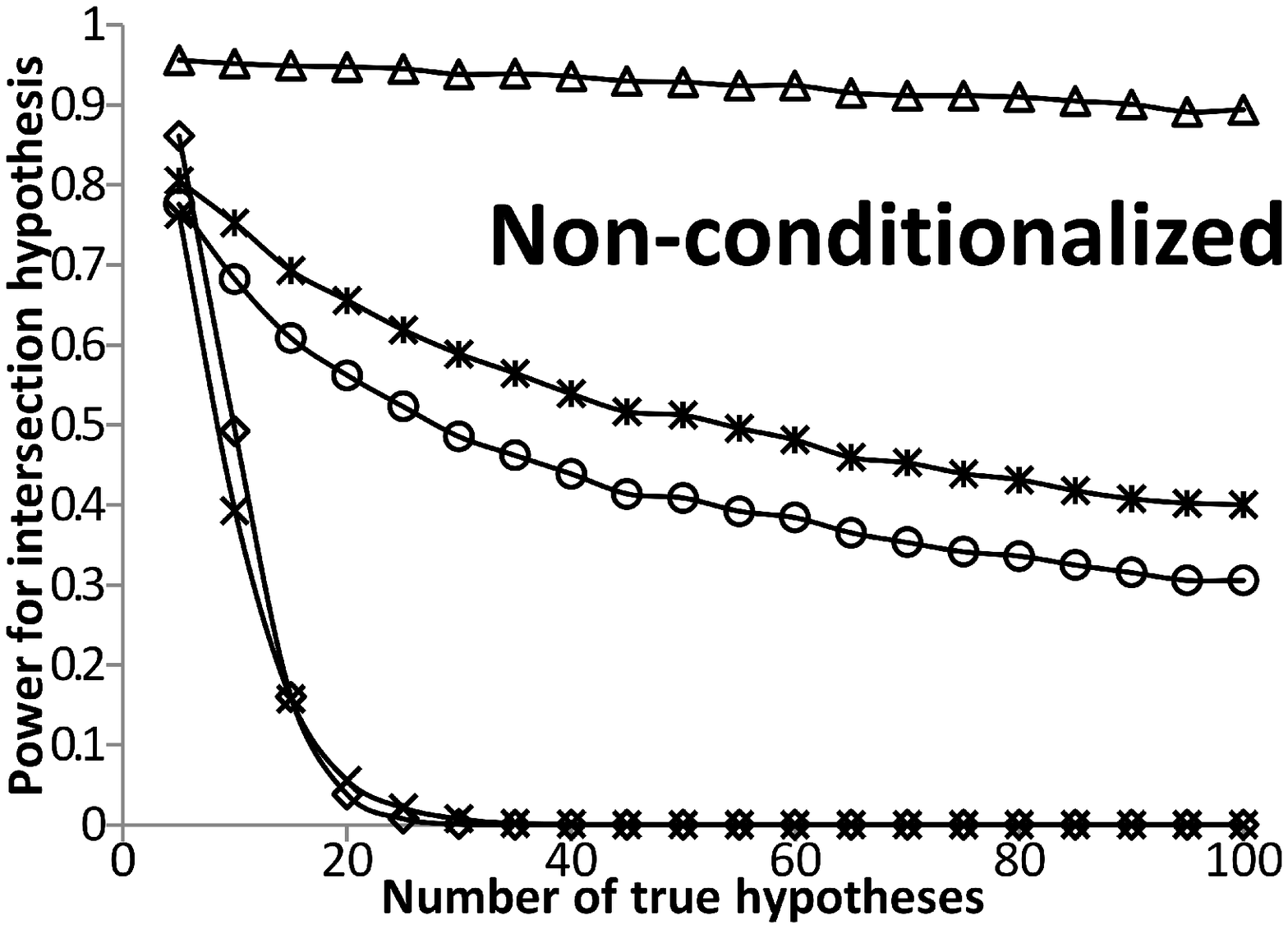}
\includegraphics[width=0.48\linewidth]{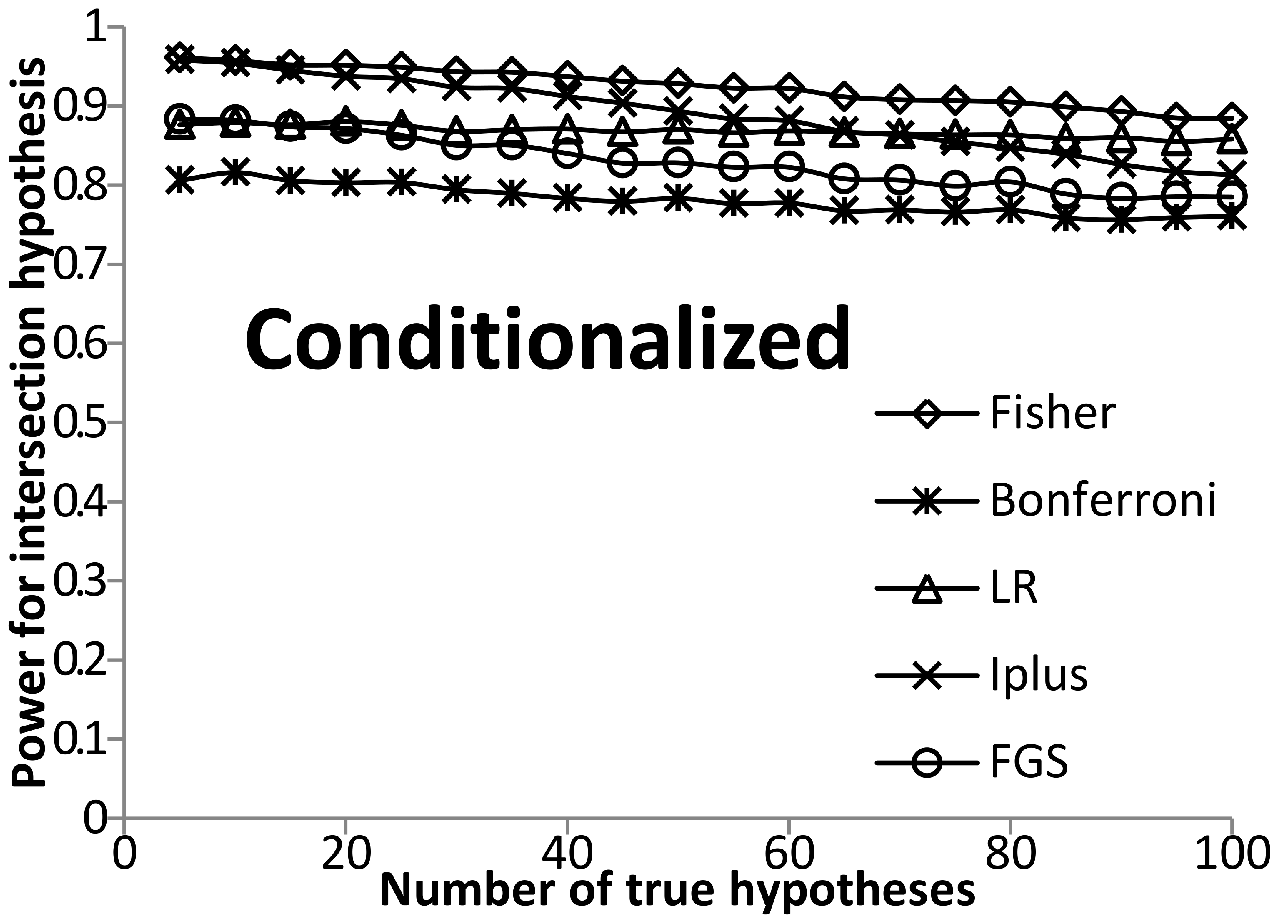}
\caption[Power and m]{Power as a function of the number of true hypothesis. The number of false hypotheses is fixed at 5 in all points. The number of true hypotheses increases from left to right. The considered noncentrality parameter is $\pm 2$.}
\label{fig:figureM}
\end{figure}

\begin{figure}[!p]
\centering
\includegraphics[width=0.48\linewidth]{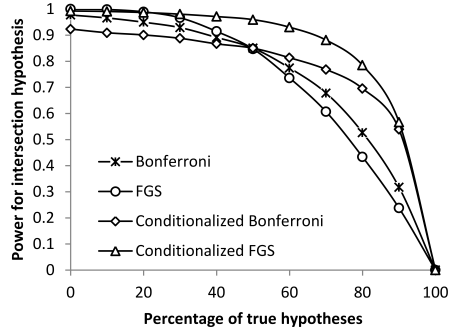}
\caption[Power and percentage]{Power as a function of the percentage of true hypothesis. The total number of hypotheses is fixed at 100 in all points. The percentage of true hypotheses increases from left to right. The considered noncentrality parameter is $\pm 1.5$.}
\label{fig:figureP2}
\end{figure}

\begin{figure}[!p]
\centering
\includegraphics[width=0.48\linewidth]{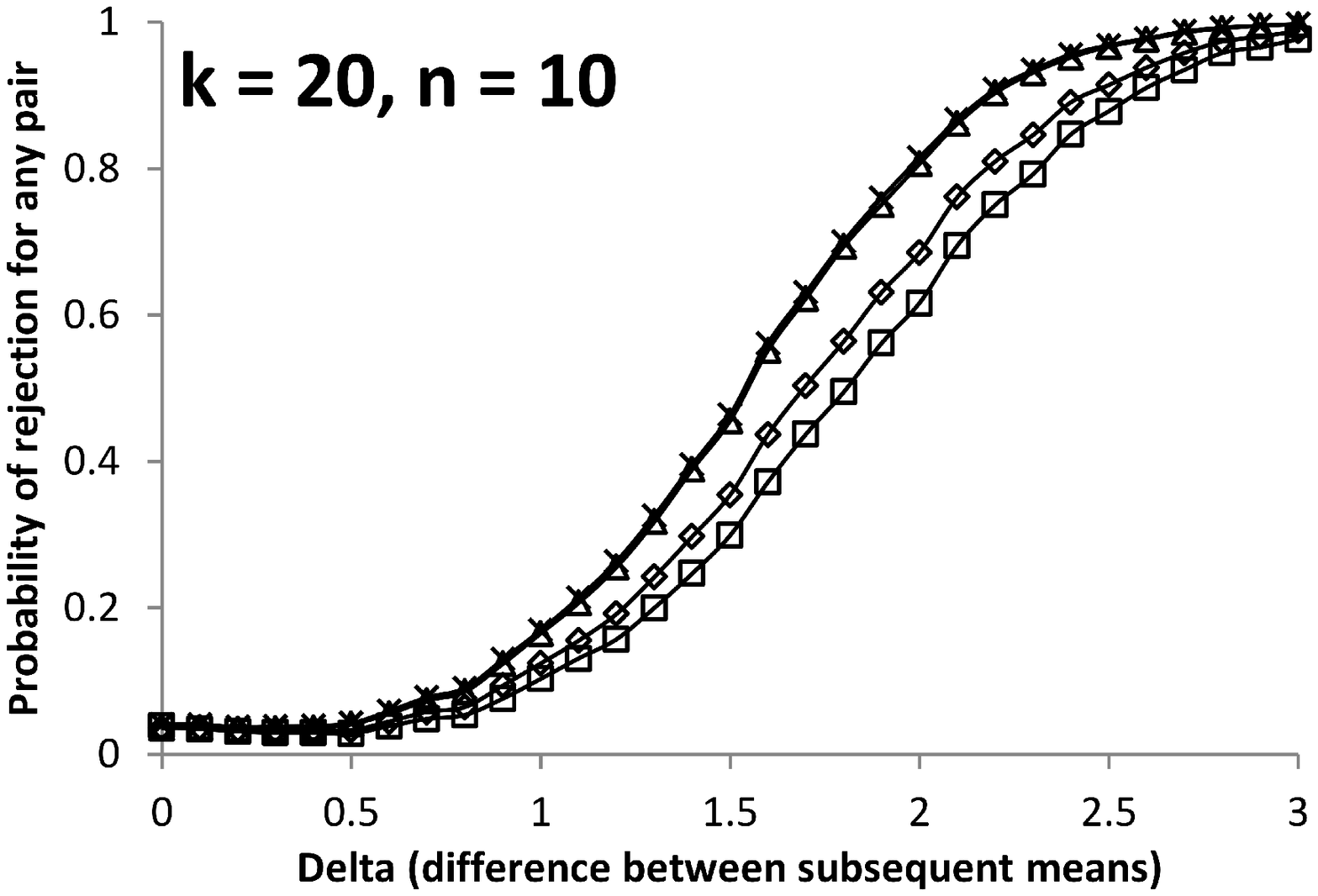}
\includegraphics[width=0.48\linewidth]{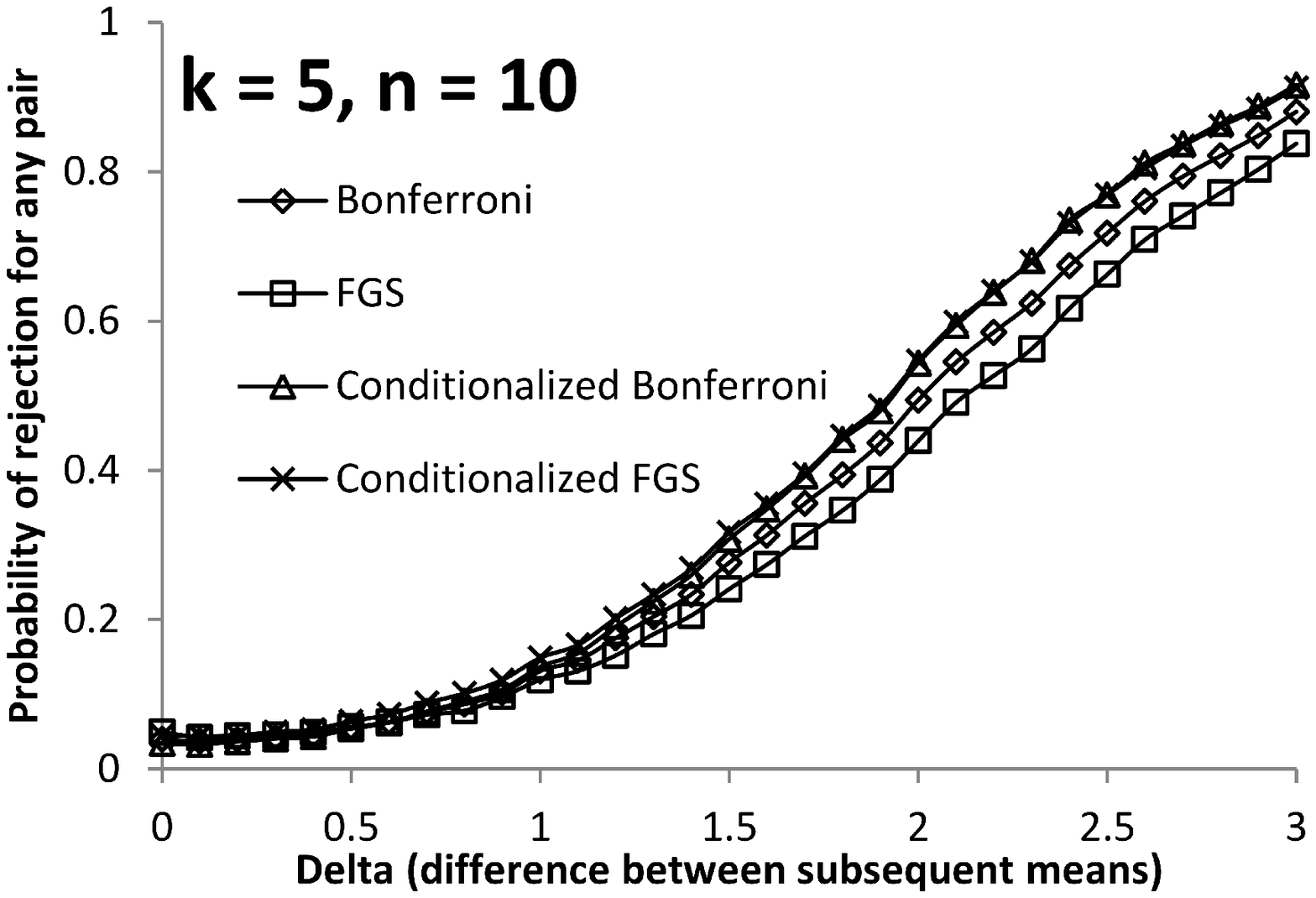}
\caption[Power in pairwise comparisons]{Probability of rejecting any one-sided hypothesis $\mu_i\leq\mu_j$ with $i < j$ if all pairwise comparisons are made. The probability is plotted as a function of the difference $\delta$ between the first $k-1$ consecutive means. The last mean is equal to the first. The number of means was $k=20$ or $k=5$.}
\label{fig:figurePairs}
\end{figure}

\begin{figure}[!p]
\begin{center}
\includegraphics[width=1\linewidth]{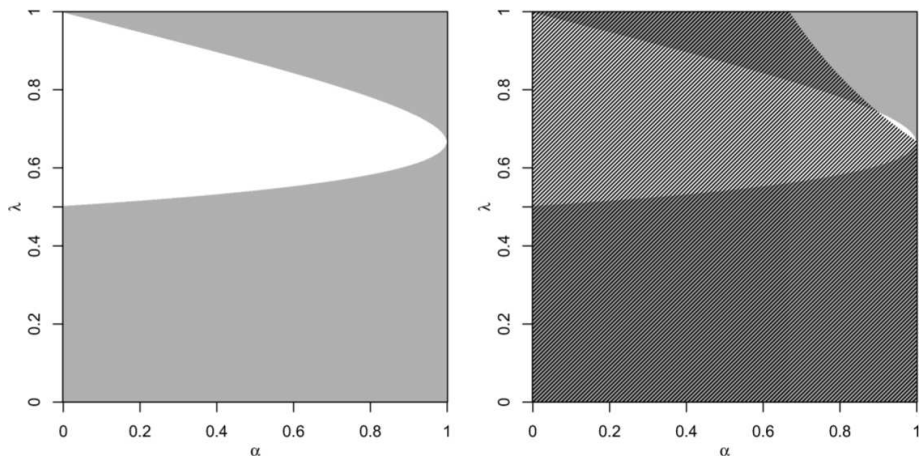}
\end{center}
\caption{Permissible ranges for $\lambda$ given $\alpha$. On the left the grey area shows the permissible combinations of $\lambda$ and $\alpha$ based on Lemma \ref{lem.bivariate_PQD}. On the right the solid grey and dashed areas show the combinations permitted by both Lemmas \ref{lem.bivariate_PQD} and \ref{lem.bivariate_normal}.}
\label{fig.lambda_ok_rough_QPD}
\end{figure}
\end{document}